\documentclass[preprint, 3p, twocolumn, times]{elsarticle}

\usepackage{times}
\usepackage{balance}
\usepackage{booktabs}
 \usepackage{siunitx}
\usepackage{url}

\usepackage{amsmath}
\usepackage{algorithm}
\usepackage{paralist}
\usepackage[noend]{algorithmic}
\usepackage{epsfig}
\usepackage{tikz}
\usepackage{subfig}
\graphicspath{{./figures/}{./figures/journalVersion/}}

\newcommand{\probf}{p_{\mathrm{false}}}
\newcommand{\setprobf}{\rho_{\mathrm{false}}}
\newcommand{\nbEl}{n}
\newcommand{\nbElExpected}{n_{\mathrm{exp}}}

\usetikzlibrary{chains,positioning}

\usepackage{todonotes}

\usepackage{graphicx}

\makeatletter
\makeatother

\DeclareCaptionType{copyrightbox}

\newcommand\threefiguresnew[9]{%
    \def\tempa{#1}%
    \def\tempb{#2}%
    \def\tempc{#3}%
    \def\tempd{#4}%
    \def\tempe{#5}%
    \def\tempf{#6}%
    \def\tempg{#7}%
    \def\temph{#8}%
    \def\tempi{#9}%
    \threefigurescont
}
\newcommand\threefigurescont[1]{%
\begin{figure*}
\centering \subfloat[\tempb \label{\tempc}]{%
 \includegraphics[width=0.3 \textwidth]{\tempa}
 }
 \subfloat[\tempe \label{\tempf}]{%
 \includegraphics[width=0.3\textwidth]{\tempd}
 }
 \subfloat[\temph \label{\tempi}]{%
 \includegraphics[width=0.3\textwidth]{\tempg}
 }
 \caption{#1}
\end{figure*}
}

\newtheorem{theorem}{Theorem}
\newproof{proof}{Proof}

\journal{Information Systems}

\begin{document}
\begin{frontmatter}

\title{Bloofi: Multidimensional Bloom Filters}

\author[adina]{Adina Crainiceanu\corref{cor1}}
\ead{adina@usna.edu}
\address[adina]{US Naval Academy, USA}

\author[daniel]{Daniel Lemire}
\ead{lemire@gmail.com}
\address[daniel]{LICEF Research Center, TELUQ University of Quebec, Canada}

 \cortext[cor1]{Corresponding author. Tel.: 00+1+ 410 293-6822; fax: 00+1+410 293-2686.}


\begin{abstract}
Bloom filters are probabilistic data structures commonly used for approximate membership problems in many areas of Computer Science (networking, distributed systems, databa\-ses, etc.).
With the increase in data size and distribution of data, problems arise where a large number of Bloom filters are available, and all them need to be searched for potential matches.
As an example, in a federated cloud environment, each cloud provider could encode the information using Bloom filters and share the Bloom filters with a central coordinator. The problem of interest is not only whether a given element is in any of the sets represented by the Bloom filters, but which of the existing sets contain the given element. This problem cannot be solved by just constructing a Bloom filter on the union of all the sets. Instead, we effectively have a multidimensional Bloom filter problem: given an element, we wish to receive a list of candidate sets where the element might be.

To solve this problem, we consider 3~alternatives. Firstly, we can naively check many Bloom filters. Secondly, we propose to organize
the Bloom filters in a hierarchical index structure akin to a B+~tree, that we call  Bloofi. Finally, we propose another data structure that packs the Bloom filters in such a way as to exploit bit-level parallelism, which we call Flat-Bloofi.

Our theoretical and experimental results show that Bloofi and Flat-Bloofi provide  scalable and efficient solutions alternatives to search through a large number of Bloom filters.
\end{abstract}

\begin{keyword}
Bloom filter index \sep multidimensional Bloom filter \sep federated cloud \sep data provenance
\end{keyword}

\end{frontmatter}

\section{Introduction}

Bloom filters~\cite{Bloom:1970} are used to efficiently check whether an object is likely to be in the set (match) or whether the object is definitely not in the set (no match). False positives are possible, but false negatives are not. Due to their efficiency, compact representation, and flexibility in allowing a trade-off between space and false positive probability, Bloom filters are popular in representing diverse sets of data. They are used in databases~\cite{mullin90tose}, distributed systems~\cite{bigtable}, web caching~\cite{fan98webCaching}, and other network applications~\cite{Broder02networkapplications}.
For example, Google BigTable~\cite{Chang:2008:BDS:1365815.1365816} and Apache Cassandra~\cite{Vo:2010:TET:1920841.1920907} use Bloom filters to reduce the disk lookups for non-existent data.
 As digital data increases in both size and distribution, applications generate a large number of Bloom filters,
 and these filters need to be searched to find the sets containing particular objects.

Our work is
 motivated by  highly distributed data provenance applications, in which data is tracked as it is created, modified, or sent/received between the multiple sites participating in the application, each site maintaining the data in a cloud environment.
 Bloom filters can be maintained by each individual site and shared with a central location. For each piece of data, we need to find the sites holding the data. Thus, we may need to search through a large number of Bloom filters stored at the central location.

Indexing Bloom filters is different than indexing generic objects to improve search time. There is one level of indirection between the elements searched for, and the objects directly indexed by the index structure. In particular, each Bloom filter is a compact representation of an underlying set of elements. The question of interest is an \emph{all-membership} query: given a particular element (not a Bloom filter), which underlying sets contain that element? The query subject is an element, but the objects we are indexing and searching through are Bloom filters, so what we are creating is a meta-index. The traditional index structures, such as hash indexes, B+trees, R trees etc.\
and their distributed versions~\cite{Aguilera:2008:PSD:1453856.1453922} do not directly apply in this case as we are indexing Bloom filters and not the keys themselves. All we are given from each site is a Bloom filter.

There has been significant work in using Bloom filters in various applications, and developing variations of Bloom filters. Counting filters~\cite{fan98webCaching,4534126} support deletions from the Bloom filter; compressed Bloom filters~\cite{Mitzenmacher2001compressedBloomFIlter} are used with web caching; stable Bloom filters~\cite{Deng2006stableBloomFilters} eliminate duplicates in streams, spectral Bloom filters~\cite{Cohen2003spectralBloomFilter} extend the applicability of Bloom filters to multi-sets, multi-class Bloom Filter (MBF)~\cite{Li:2011:SDC:2117689.2118897} use per-element probabilities. Yet there has been few attempts to accelerate queries over   many Bloom filters, what we call the \emph{multidimensional Bloom filter problem}, even though our problem is  closely related to signature file methods (see Section~\ref{sec:relatedWork}) where one seeks to index set-value attributes.

To solve this problem, we propose \emph{Bloofi} (\textbf{Bloo}m \textbf{F}ilter \textbf{I}ndex), a hierarchical index structure for Bloom filters. Bloofi provides probabilistic answers to \emph{all-membership} queries and scales to tens of thousands of Bloom filters.
When the probability of false positives is low, Bloofi of order $d$ (a tunable parameter) can provide $O(d \log_d N)$ search cost, where $N$ is the number of Bloom filters indexed. Bloofi also provides support for inserts, deletes, and updates with $O(d \log_d N)$ cost and requires $O(N)$ storage cost.
In designing Bloofi, we take advantage of the fact that the bitwise OR between Bloom filters of same length, constructed using the same hash functions, is also a Bloom filter. The resulting Bloom filter represents the union of the sets represented by the individual Bloom filters. This property allows us to construct a tree where the leaf levels are the indexed Bloom filters, and the root level is a Bloom filter that represents all the elements in the system. This tree is used to prune the search space (eliminate Bloom filters as candidates for matches) while processing \emph{all-membership} queries. Our performance evaluation shows that Bloofi performs best when the false positive probability of the union Bloom filter (a Bloom filter that is the union of all the indexed Bloom filters) is low and provides $O(d \times \log_d N)$ search performance in most cases, with $O(N)$ storage cost and $O(d \times \log_d N)$ maintenance cost.
 Bloofi could be used whenever a large number of Bloom filters that use the same hash functions need to be checked for matches.

Bloom filters are constructed over bitmaps, i.e., vector of Booleans. With bitmaps, we can exploit bit-level parallelism: on a 64-bit processor, we can compute the bitwise or between
64~bits using a single instruction. We use
bit-level parallelism with Bloofi to optimize the construction of the data structure.
However, we have also designed an alternative
data structure that is designed specifically to exploit bit-level parallelism (henceforth Flat-Bloofi). Though not as scalable as Bloofi, it can be fast when the number of Bloom filters is moderate.

This article is an extended version of ``Bloofi: A Hierarchical Bloom Filter Index with Applications to Distributed Data Provenance''~\cite{crainiceanu2013bloofi} published in the 2nd International Workshop on Cloud Intelligence Cloud-I 2013. The paper was completely revised, and the new version introduces an additional data structure, Flat-Bloofi, a new implementation for Bloofi that improves the performance by an order of magnitude, and a new performance evaluation.

The rest of this paper is structured as follows:
Section~\ref{sec:distributedProvenance} describes a distributed data provenance application for Bloofi.
Section~\ref{sec:bloom} briefly reviews the concept of Bloom filter.
 Section~\ref{sec:indexing} introduces Bloofi, a hierarchical index structure for Bloom filters. Section~\ref{sec:maintenance} introduces the maintenance algorithms and a theoretical performance analysis.
  Section~\ref{sec:flat} introduces Flat-Bloofi, a data structure for the multidimensional Bloom filter problem, designed to exploit bit-level parallelism.
 Section~\ref{sec:experiments} shows the experimental results. We discuss related work in Section~\ref{sec:relatedWork} and conclude in Section~\ref{sec:conclusions}.

\section{Motivation: application to distributed data provenance}\label{sec:distributedProvenance}
In this section we describe the distributed data provenance application that motivated our work on Bloofi.
Let us assume that a multinational corporation with hundreds of offices in geographically distributed locations (sites) around the world is interested in tracking the documents produced and used within the corporation. Each document is given a universally unique identifier (uuid) and is stored in the local repository, in a cloud environment. Documents can be sent to another location (site) or received from other locations, multiple documents can be bundled together to create new documents, which therefore are identified by new uuids, documents can be decomposed in smaller parts that become documents themselves, and so on. All these ``events'' that are important to the provenance of a document are recorded in the repository at the site generating the event. The events can be stored as RDF triples in a scalable cloud triple store such as Rya~\cite{rya}. The data can be modeled as a Directed Acyclic Graph (DAG), with labeled edges (event names) and nodes (document uuids). As documents travel between sites, the DAG is in fact distributed not only over the machines in the cloud environment at each site, but also over hundreds of geographically distributed locations. The data provenance problem we are interested in solving is finding all the events and document uuids that form the ``provenance'' path of a given uuid (all ``ancestors'' of a given node in the distributed graph).

Storing all the data, or even all the uuids and their location, in a centralized place is not feasible, due to the volume and speed of the documents generated globally. Fully distributed data structures, such as Chord~\cite{stoica2001chord} or P-Ring~\cite{crainiceanu2007pring}, require even more communication (messages) than the centralized solution, increasing the latency and bandwidth consumption, so they are also not feasible due to the volume and speed of the documents generated globally. Moreover, local regulations might impose restrictions on where the data can be stored. However, since all the global locations belong to the same corporation, data exchange and data tracking must be made possible.

Without any information on the location of a uuid, each provenance query for a uuid must be sent to all sites. Each site can then determine the local part of the provenance path, and return it. However, the provenance might contain new uuids, so a new query needs to be sent to each site for each new uuid connected to the original uuid, until no new uuids are found. This recursive process could consume significant bandwidth and latency in a geographically distributed system.

To minimize the number of unnecessary messages sent to determine the full provenance of an object, each local site maintains a Bloom filter of all the uuids in the local system. Updates to the Bloom filter are periodically propagated to a centralized location (the headquarters for example). Since the Bloom filters are compact representations of the underlying data, less bandwidth is consumed when Bloom filters are sent versus sending the actual data. At the central location, a Bloofi index is constructed from the Bloom filters, and every time a provenance query for a uuid is made, the Bloofi index is used to quickly determine the sites that might store provenance information for the given uuid. If the query load is too high and having the Bloofi index in a single location affects performance and/or availability, multiple Bloofi indexes could be constructed in several locations.

\section{Bloom filters}
\label{sec:bloom}

Checking for the presence of a value in a set or similar  data structure can be expensive, especially if the
data structure is stored on disk. When we expect
many requests for values that are not present, it is helpful to have an auxiliary data structure that will quickly dismiss these requests.   Bloom filters can serve this purpose.

Each Bloom filter~\cite{Bloom:1970} is a bit array (or bitmap) of length $m$ constructed by using a set of $k$~hash functions. The empty Bloom filter has all bits 0. To add an element to the filter, each of the $k$~hash functions maps the new element to a position in the bit array. The bit in that position is turned to 1. To check whether an element is a member of the set represented by the Bloom filter, the $k$~hash functions are applied to the test element. If any of the resulting $k$ positions is 0, the test element is not in the set, with probability 1. If all $k$ positions are 1, the Bloom filter \emph{matches} the test element, and the test element might be in the set (it might be a true positive or a false positive). There is a trade-off between the size of the Bloom filter and the probability of false positives, $\probf{}$, returned by it. We have that $\probf{} \approx ( 1-e^{-kn/m} )^k$ assuming that there are $n$~elements in the filter~\cite{Mitzenmacher:2005:PCR:1076315}; the probability is minimal
when $k= m/n \ln 2$. The probability $\probf{}$ can be lowered by increasing the size of the Bloom filter ($m$). For a fixed number of elements ($n$), the probability goes to zero exponentially. In the rest of the paper we assume that all the Bloom filters indexed have the same length and use the same set of hash functions.

\section{Indexing Bloom filters}\label{sec:indexing}

We are given a collection of $N$~Bloom filters $\mathcal{B}$. Given a query for a value, we want to find all Bloom filters that are a match.
We are most interested in the case where there are relatively few such Bloom filters and when a sequential search is potentially inefficient.

\subsection{Bloofi: a hierarchical Bloom filter index}

Bloofi, the Bloom filter index, is based on the following idea. We construct a tree: the leaves of the tree are the Bloom filters to be indexed, and the parent nodes are Bloom filters obtained by applying a bitwise OR on the child nodes. This process continues until the root is reached. The index has the property that each non-leaf Bloom filter in the tree represents the union of the sets represented by the Bloom filters in the sub-tree rooted at that node. As a consequence, if an object matches a leaf-level Bloom filter, it matches all the Bloom filters in the path from that leaf to the root. Conversely, if a particular Bloom filter in Bloofi does not match an object, there is no match in the entire sub-tree rooted at that node.

Using Bloofi, a membership query starts by first querying the root Bloom filter: if it does not match the queried object, then none of the indexed sets contain the object and a negative answer is returned. If the root does match the object, the query proceeds by checking which of the child Bloom filters matches the object. The query continues down the path of Bloom filters matching the object, until the leaf level is reached. In a balanced tree, the height of Bloofi is logarithmic in the number of Bloom filters indexed, and each step in the query process goes down one level in the tree. In the best case, a query with a negative answer is answered in constant time (check the root only), and a query with a positive answer is answered in logarithmic time. However, if multiple paths in the index are followed during the query process, the query time increases. Section~\ref{sec:insert} introduces our heuristics for Bloofi construction such that the number of ``misleading'' paths in the Bloofi is reduced (similar Bloom filters are in the same sub-tree).

There are many possible implementations for Bloofi: as a tree similar with binary search trees, AVL trees, B+~trees, etc. Due to the flexibility allowed by having a balanced tree with the number of child pointers higher than two, we implement Bloofi like a B+~trees.
We start with an \emph{order} parameter $d$, and each non-leaf node maintains $l$~child pointers: $d \le l \le 2d$ for all non-root nodes, and $2 \le l \le 2d$ for the root.
Each node in Bloofi stores only one value, which is different than general search trees. For the leaves, the value is the Bloom filter to be indexed. For all non-leaf nodes, the value is obtained by applying bitwise OR on the values of its child nodes. Throughout the paper we use the usual definitions for tree, node in a tree, root, leaf, depth of a node (the number of edges from root to the node), height of a tree (maximum depth of a node in the tree), sibling, and parent.

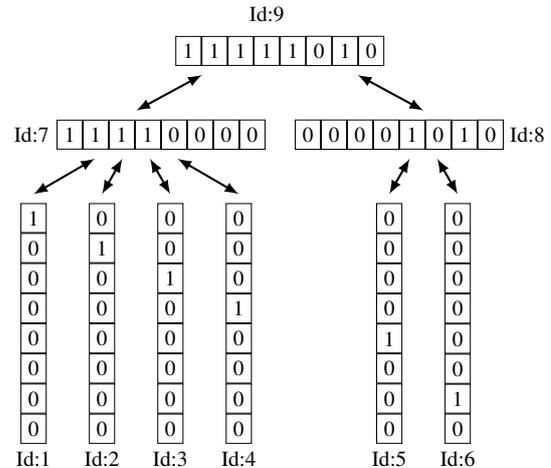
\begin{figure}[htbp]
\centering
\begin{tikzpicture}[edge from parent/.style={draw,latex-latex},thick,scale=0.9, every node/.style={scale=0.9},
start chain=1 going below,
start chain=2 going below,
start chain=3 going below,
start chain=4 going below,
start chain=5 going below,
start chain=6 going below,
start chain=7 going right,
start chain=8 going right,
start chain=9 going right,
child anchor=north,
level distance=1.5cm,
level 1/.style={sibling distance=42mm},
level 2/.style={sibling distance=10mm,level distance=3cm}]
\edef\sizetape{0.2cm}
\node{%
\begin{tikzpicture}[node distance=0cm]
    \tikzstyle{mytape}=[draw,minimum size=\sizetape]
    \node  [on chain=9,mytape] {1};
    \node [on chain=9,mytape] {1};
    \node [on chain=9,mytape] {1};
    \node(id9) [on chain=9,mytape] {1};
    \node [on chain=9,mytape] {1};
        \node [above of=id9,node distance=0.6cm] {Id:9};
    \node [on chain=9,mytape] {0};
    \node [on chain=9,mytape] {1};
    \node [on chain=9,mytape] {0};
\end{tikzpicture}
}
child{node {\begin{tikzpicture}[node distance=0cm]
    \tikzstyle{mytape}=[draw,minimum size=\sizetape]
    \node (id7) [on chain=7,mytape] {1};
      \node [left =of id7,node distance=0.6cm] {Id:7};
    \node [on chain=7,mytape] {1};
    \node [on chain=7,mytape] {1};
    \node [on chain=7,mytape] {1} ;
    \node [on chain=7,mytape] {0};
    \node [on chain=7,mytape] {0};
    \node [on chain=7,mytape] {0};
    \node [on chain=7,mytape] {0};
\end{tikzpicture}
}
child {node{\begin{tikzpicture}[node distance=0cm]
    \tikzstyle{mytape}=[draw,minimum size=\sizetape]
    \node [on chain=1,mytape] {1};
    \node [on chain=1,mytape] {0};
    \node [on chain=1,mytape] {0};
    \node [on chain=1,mytape] {0} ;
    \node [on chain=1,mytape] {0};
    \node [on chain=1,mytape] {0};
    \node [on chain=1,mytape] {0};
    \node (id1)  [on chain=1,mytape] {0};
      \node [below =of id1,node distance=0.7cm] {Id:1};
\end{tikzpicture}}} child {node{\begin{tikzpicture}[node distance=0cm]
    \tikzstyle{mytape}=[draw,minimum size=\sizetape]
    \node [on chain=2,mytape] {0};
         \node [on chain=2,mytape] {1};
    \node [on chain=2,mytape] {0};
    \node [on chain=2,mytape] {0} ;
    \node [on chain=2,mytape] {0};
    \node [on chain=2,mytape] {0};
    \node [on chain=2,mytape] {0};
    \node (id2) [on chain=2,mytape] {0};
    \node [below =of id2,node distance=0.7cm] {Id:2};
\end{tikzpicture}}} child {node{\begin{tikzpicture}[node distance=0cm]
    \tikzstyle{mytape}=[draw,minimum size=\sizetape]
    \node [on chain=3,mytape] {0};
         \node [on chain=3,mytape] {0};
    \node [on chain=3,mytape] {1};
    \node [on chain=3,mytape] {0} ;
    \node [on chain=3,mytape] {0};
    \node [on chain=3,mytape] {0};
    \node [on chain=3,mytape] {0};
    \node (id3) [on chain=3,mytape] {0};
    \node [below =of id3,node distance=0.7cm] {Id:3};
\end{tikzpicture}}} child {node{\begin{tikzpicture}[node distance=0cm]
    \tikzstyle{mytape}=[draw,minimum size=\sizetape]
    \node [on chain=4,mytape] {0};
         \node [on chain=4,mytape] {0};
    \node [on chain=4,mytape] {0};
    \node [on chain=4,mytape] {1} ;
    \node [on chain=4,mytape] {0};
    \node [on chain=4,mytape] {0};
    \node [on chain=4,mytape] {0};
    \node (id4) [on chain=4,mytape] {0};
    \node [below =of id4,node distance=0.6cm] {Id:4};
\end{tikzpicture}}} }
child{node {\begin{tikzpicture}[node distance=0cm]
    \tikzstyle{mytape}=[draw,minimum size=\sizetape]
    \node [on chain=8,mytape] {0};
    \node [on chain=8,mytape] {0};
    \node [on chain=8,mytape] {0};
    \node [on chain=8,mytape] {0} ;
    \node [on chain=8,mytape] {1};
    \node [on chain=8,mytape] {0};
    \node [on chain=8,mytape] {1};
    \node (id8) [on chain=8,mytape] {0};
     \node  [right of= id8,node distance=0.6cm] {Id:8};
\end{tikzpicture}} child {node{\begin{tikzpicture}[node distance=0cm]
    \tikzstyle{mytape}=[draw,minimum size=\sizetape]
    \node [on chain=5,mytape] {0};
         \node [on chain=5,mytape] {0};
    \node [on chain=5,mytape] {0};
    \node [on chain=5,mytape] {0} ;
    \node [on chain=5,mytape] {1};
    \node [on chain=5,mytape] {0};
    \node [on chain=5,mytape] {0};
    \node (id5) [on chain=5,mytape] {0};
    \node [below =of id5,node distance=0.7cm] {Id:5};
\end{tikzpicture}}}  child {node{\begin{tikzpicture}[node distance=0cm]
    \tikzstyle{mytape}=[draw,minimum size=\sizetape]
    \node [on chain=6,mytape] {0};
         \node [on chain=6,mytape] {0};
    \node [on chain=6,mytape] {0};
    \node [on chain=6,mytape] {0} ;
    \node [on chain=6,mytape] {0};
    \node [on chain=6,mytape] {0};
    \node [on chain=6,mytape] {1};
    \node (id6) [on chain=6,mytape] {0};
    \node [below =of id6,node distance=0.7cm] {Id:6};
\end{tikzpicture}}} };
\end{tikzpicture}
\caption{Bloofi Tree of Order 2\label{fig:bloofiTree}}

\label{fig:bloofiTreeÉ}
\end{figure}

Fig.~\ref{fig:bloofiTree} shows an example of a Bloofi index of order 2. Each internal node has between 2 and 4 child pointers. The leaf level of the tree contains the original Bloom filters indexed by the Bloofi index, with identifiers 1, 2, 3, 4, 5, and 6. The node identifiers are shown here for ease of presentation, but they are also used in practice during updates and deletes to identify the node that needs to be updated or deleted (see Section~\ref{sec:delete} and Section~\ref{sec:update} for details). In the rest of the paper, we often use ``node X'' to refer to the node with identifier X. At the next higher level, the values are obtained by applying bitwise OR on the values of the children nodes, so the value of the node 7 is the bitwise OR between values of nodes 1, 2, 3, and 4. The process continues until the root is reached.

We use the following notation.
Given a node, its parent is written node.parent. The ordered set of its children is written by node.children. The number of children is written node.nbDesc with the convention that node.nbDesc is zero for a leaf node.
For $i=0, \ldots \text{node.nbDesc}-1$, $\text{node.children}[i]$ is the $(i+1)^{\mathrm{th}}$ children of the node.

Each node has a corresponding bit array indicated
by node.val. We let $|$ be the bitwise OR operation so that when node is not a leaf, the bit array is just the aggregation of the bit array of the children: \begin{align*} \text{node.val}  = &   \text{node.children}[0].\text{val} \, \big | \,\text{node.children}[1].\text{val} \, \big | \, \cdots  & \\
& \big |  \,\text{node.children}[\text{node.nbDesc}-1].\text{val}. &\end{align*}
Individual bit values in the bit array are accessed as $\text{node.val}[i]$ for $i =0,\ldots, m$.

Since Bloofi is a balanced tree, with each internal node having at least $d$ child nodes, where $d$ is the order of the tree, the height of a Bloofi index of order $d$ is at most $\lfloor log_{d} N\rfloor$, where $N$ is the number of Bloom filters to index.

\subsection{Search}\label{sec:search}
The search algorithm (Algorithm~\ref{alg:findMatches}) returns the identifiers of all the leaf-level Bloom filters that match a given object in the subtree rooted at the given node. It first checks whether the current node value matches the object (line~\ref{line:match1}).
If not, then none of the Bloom filters in that sub-tree match the object, so the empty set is returned (line~\ref{line:ermpty}). If the current node does match the object, then either it is a leaf, in which case it returns the identifier (line~\ref{line:reidfir}), or it is an inner node, in which case the \texttt{findMatches} function is called recursively for all of its child nodes (lines~\ref{line:beginrecur}--\ref{line:endrecur}).

\begin{algorithm}[htbp]
\caption{: \texttt{findMatches}(node,$o$)}\label{alg:findMatches}
\begin{algorithmic}[1]\small
\STATE //RETURN VALUE: the identifiers of leaves in the subtree rooted at node with Bloom filters matching the object~$o$
\STATE //if node does not matches the object, return empty set, else check the descendants
\IF{not match(node.val,$o$)}\label{line:match1}
\RETURN $\emptyset$;\label{line:ermpty}
\ELSE
\STATE //if this node is a leaf, just return the identifier
\IF{$\text{node.nbDesc} = 0$ }
\RETURN $node.id$;\label{line:reidfir}
\ELSE
\STATE //if not leaf, check the descendants
\STATE returnList = $\emptyset$;
\FOR{\label{line:beginrecur}$i = 0$; $i < \text{node.nbDesc}$; i++}
\STATE returnList.add(findMatches(node.children[$i$],$o$));
\ENDFOR\label{line:endrecur}
\ENDIF
\ENDIF
\RETURN returnList;
\end{algorithmic}
\end{algorithm}

\textbf{Example.} Consider a query for object value $4$ in the Bloofi tree in Figure~\ref{fig:bloofiTree}. The \texttt{findMatches} function in Algorithm~\ref{alg:findMatches} is invoked with arguments root and $4$. In line 3 of Algorithm~\ref{alg:findMatches}, the algorithm checks whether the value of the root matches $4$. For simplicity of presentation, assume that there is only one hash function used by the Bloom filters, the function is $h(x) = x \bmod{8}$, and the elements in the underlying set are integers. Since $\text{root.val}[4]$ is 1, the root matches the queried object $4$ and the search proceeds by invoking the \texttt{findMatches} function for each of its child nodes. The first child node, node 7, does not match the queried object, so \texttt{findMatches} for that sub-tree returns $\emptyset$. The second child node of the root, node 8, matches $4$, so the search continues at the next lower level. Node 5 matches the queried object, and is a leaf, so the \texttt{findMatches} function returns the identifier 5. Leaf 6 does not match the queried object, so that \texttt{findMatches} call returns $\emptyset$. Now, the recursive call on node 8 returns with the value \{5\}, and finally the call on the root returns \{5\}, which is the result of the query.

\paragraph{Search cost} The complexity of the search process is given by the number of \texttt{findMatches} invocations. In the best case, if there are no leaf-level Bloom filters matching a given object, the number of Bloom filters to be checked for matches is 1 (the root). To find a leaf-level Bloom filter that matches a query, the number of \texttt{findMatches} invocations is $O(d \log_d N)$ in the best case (one path is followed, and at each node, all children are checked to find the one that matches) and $O(N)$ in the worst case (since the maximum number of nodes in a Bloofi tree is $\lceil N + (N-1)/(d-1)\rceil$, the search cost is $O(N)$ if all nodes need to be checked for matches).

\section{Bloofi maintenance}\label{sec:maintenance}
We introduce the algorithms for inserting, deleting, and updating Bloom filters.

\subsection{Insert}\label{sec:insert}

Ideally, when inserting  Bloom filters, we would like to keep them in  partitions
so that the overlap between different partitions is small. That is, we would
like \emph{similar} Bloom filters to be grouped together as much as possible.
And, conversely, we would like Bloom filters from different partitions to be as different as possible. Such problems are commonly NP-hard (e.g., the Minimum Graph Bisection Problem) though they can be sometimes approximated
efficiently. We leave a more formal investigation of this problem to future work and use a heuristic.

The algorithm for inserting (Algorithm~\ref{alg:insert}) finds a leaf which is ``close'' to the input Bloom filter in a given metric space, and inserts the new Bloom filter next to that leaf. The intuition is that similar Bloom filters should be in the same sub-tree to improve search performance. As distance metric, we use the Hamming distance. That is, we count the number of bits that differ. This can be computed quickly by computing the cardinality of the bitwise exclusive OR of  two bit arrays along with fast functions to count the number of 1s in the resulting words (e.g., \texttt{Long.bitCount} in Java). We could consider other distance metric, and we experiment with Cosine and Jaccard metrics in Section~\ref{sec:varyMetric}.

 The new Bloom filter is inserted by first updating the value of the current node by computing the bitwise or with the value of the filter to be inserted (since that node will be in the sub-tree), and then recursively calling the insert function on the child node most similar with the new value (line~\ref{line:recursive}).
 Once the most similar leaf node is located, a new leaf is created for the new Bloom filter (line~\ref{line:newcreated}) and is inserted as a sibling of the node by calling the \texttt{insertIntoParent} function (Algorithm~\ref{alg:insertEntry}). This function takes as parameters the new node newEntry, and the most similar node to it, node. We insert newEntry  as a sibling of node. If the number of children in the parent is still at most $2d$, the insert is complete. If an overflow occurs, the node splits (lines 9-16) and the newly created node is returned by the function. The splits could occasionally propagate up to the root level. In that case, a new root is created, and the height of the Bloofi tree increases (line~\ref{line:newroot}).

\begin{algorithm}[htbp]
\caption{: \texttt{insert}(newBloomFilter, node)}\label{alg:insert}
\begin{algorithmic}[1]\small
\STATE // $D$ is a distance function between bit arrays
\STATE //insert into the sub-tree rooted at the given node
\STATE //RETURN: null or pointer to new child if split occurred
\STATE //if node is not leaf, direct the search for the new filter place
\IF{$\text{node.nbDesc} > 0$}
\STATE //update the value of the node to contain the new filter
\STATE $\text{node.val} = \text{node.val}\, \big | \,\text{newBloomFilter}$;
\STATE //find the most similar child and insert there
\STATE find child $C$  minimizing
$D(C.\text{val}, \text{newBloomFilter})$
\STATE \label{line:recursive} $\text{newSibling} = \text{\texttt{insert}}(\text{newBloomFilter}, C)$;
\STATE //if there was no split, just return null
\IF{$\text{newSibling} = \text{null}$}
\RETURN null;
\ELSE
\STATE //there was a split; check whether a new root is needed
\IF{$\text{node.parent} = \text{null}$}
\STATE //root was split; create a new root
\STATE \label{line:newroot} $\text{newRoot} = \text{new BFINode()}$ ; // create new node
\STATE $\text{newRoot.val} = \text{node.val}\, \big | \, \text{newSibling.val}$;
\STATE $\text{newRoot.parent} = \text{null}$;
\STATE $\text{newRoot.children.add}(\text{node})$;
\STATE $\text{newRoot.children.add}(\text{newSibling})$;
\STATE $\text{root} = \text{newRoot}$;
\STATE $\text{node.parent} = \text{newRoot}$;
\STATE $\text{newSibling.parent} = \text{newRoot}$;
\RETURN null;
\ELSE
\STATE $\text{newSibling} = \texttt{insertIntoParent}(\text{newSibling}, \text{node})$
\RETURN newSibling;
\ENDIF //current node is root or not
\ENDIF //there was a split or not
\ELSE
\STATE //if node is leaf, need to insert into the parent
\STATE //create a node for newBloomFilter
\STATE \label{line:newcreated} $\text{newLeaf} = \text{new BFINode()}$;
\STATE $\text{newLeaf.val} = \text{newBloomFilter}$;
\STATE //insert the new leaf into the parent node
\STATE $\text{newSibling} = \texttt{insertIntoParent}(\text{newLeaf},\text{node})$;
\RETURN newSibling;
\ENDIF //current node is leaf or not
\end{algorithmic}
\end{algorithm}

\begin{algorithm}[thbp]
\caption{: \texttt{insertIntoParent}(newEntry, node)}\label{alg:insertEntry}
\begin{algorithmic}[1]\small
\STATE //insert into the node's parent, after the node pointer
\STATE //RETURN: null or pointer to new child if split occurred
\STATE $\text{node.parent.children.addAfter}(\text{newEntry}, \text{node})$;
\STATE $\text{newEntry.parent} = \text{node.parent}$;
\STATE //check for overflow
\IF{$\text{node.nbDesc} > 2d$ }
\RETURN null;
\ELSE
\STATE //overflow, so split
\STATE $P = \text{node.parent}$;
\STATE $P' = \text{new BFINode()}$;
\STATE move last $d$ children from $P$ to $P'$;
\STATE update parent information for all children of $P'$;
\STATE re-compute $P.\text{val}$ as the OR between its children values;
\STATE compute $P'.\text{val}$ as the OR between its children values;
\RETURN $P'$;
\ENDIF
\end{algorithmic}
\end{algorithm}

\begin{figure}[htbp]
\resizebox{\columnwidth}{!}{\begin{tikzpicture}[edge from parent/.style={draw,latex-latex},
scale=0.9, every node/.style={scale=0.9},
start chain=1 going below,
start chain=2 going below,
start chain=3 going below,
start chain=4 going below,
start chain=5 going below,
start chain=6 going below,
start chain=10 going below,
start chain=7 going right,
start chain=8 going right,
start chain=9 going right,
start chain=11 going right,
child anchor=north,
level distance=1.5cm,
level 1/.style={sibling distance=34mm},
level 2/.style={sibling distance=10mm,level distance=3.5cm}]
\edef\sizetape{0.2cm}
\node{%
\begin{tikzpicture}[node distance=0cm]
    \tikzstyle{mytape}=[draw,minimum size=\sizetape]
    \node(id9)  [on chain=9,mytape] {1};
    \node [on chain=9,mytape] {1};
    \node [on chain=9,mytape] {1};
    \node [on chain=9,mytape] {1};
    \node [on chain=9,mytape] {1};
    \node [on chain=9,mytape, blue, font=\itshape] {1};
    \node [on chain=9,mytape] {1};
    \node [on chain=9,mytape] {0};
    \node [above of=id9,node distance=0.4cm] {Id:9};
\end{tikzpicture}
}
child{node {\begin{tikzpicture}[node distance=0cm]
    \tikzstyle{mytape}=[draw,minimum size=\sizetape]
    \node (id7) [on chain=7,mytape,blue,font=\itshape] {1};
      \node [above of =id7,node distance=0.4cm] {Id:7};
    \node [on chain=7,mytape,blue,font=\itshape] {1};
    \node [on chain=7,mytape,blue,font=\itshape] {1};
    \node [on chain=7,mytape,blue,font=\itshape] {0} ;
    \node [on chain=7,mytape,blue,font=\itshape] {0};
    \node [on chain=7,mytape,blue,font=\itshape] {0};
    \node [on chain=7,mytape,blue,font=\itshape] {0};
    \node [on chain=7,mytape,blue,font=\itshape] {0};
\end{tikzpicture}
}
child {node{\begin{tikzpicture}[node distance=0cm]
    \tikzstyle{mytape}=[draw,minimum size=\sizetape]
    \node [on chain=1,mytape] {1};
    \node [on chain=1,mytape] {0};
    \node [on chain=1,mytape] {0};
    \node [on chain=1,mytape] {0} ;
    \node [on chain=1,mytape] {0};
    \node [on chain=1,mytape] {0};
    \node [on chain=1,mytape] {0};
    \node (id1)  [on chain=1,mytape] {0};
      \node [below =of id1,node distance=0.7cm] {Id:1};
\end{tikzpicture}}} child {node{\begin{tikzpicture}[node distance=0cm]
    \tikzstyle{mytape}=[draw,minimum size=\sizetape]
    \node [on chain=2,mytape] {0};
         \node [on chain=2,mytape] {1};
    \node [on chain=2,mytape] {0};
    \node [on chain=2,mytape] {0} ;
    \node [on chain=2,mytape] {0};
    \node [on chain=2,mytape] {0};
    \node [on chain=2,mytape] {0};
    \node (id2) [on chain=2,mytape] {0};
    \node [below =of id2,node distance=0.7cm] {Id:2};
\end{tikzpicture}}} child {node{\begin{tikzpicture}[node distance=0cm]
    \tikzstyle{mytape}=[draw,minimum size=\sizetape]
    \node [on chain=3,mytape] {0};
         \node [on chain=3,mytape] {0};
    \node [on chain=3,mytape] {1};
    \node [on chain=3,mytape] {0} ;
    \node [on chain=3,mytape] {0};
    \node [on chain=3,mytape] {0};
    \node [on chain=3,mytape] {0};
    \node (id3) [on chain=3,mytape] {0};
    \node [below =of id3,node distance=0.7cm] {Id:3};
\end{tikzpicture}}} }
child{node {\begin{tikzpicture}[node distance=0cm]
    \tikzstyle{mytape}=[draw,minimum size=\sizetape]
    \node [on chain=11,mytape,blue, font=\itshape] {0};
    \node (id11) [on chain=11,mytape,blue,font=\itshape] {0};
    \node [on chain=11,mytape,blue,font=\itshape] {1};
    \node [on chain=11,mytape,blue,font=\itshape] {1} ;
    \node [on chain=11,mytape,blue,font=\itshape] {0};
    \node [on chain=11,mytape,blue,font=\itshape] {1};
    \node [on chain=11,mytape,blue,font=\itshape] {0};
    \node [on chain=11,mytape,blue,font=\itshape] {0};
    \node [above of= id11,node distance=0.4cm,blue,font=\itshape] {Id:11};
\end{tikzpicture}} child {node{\begin{tikzpicture}[node distance=0cm]
    \tikzstyle{mytape}=[draw,minimum size=\sizetape]
    \node [on chain=10,mytape,blue,font=\itshape] {0};
    \node [on chain=10,mytape,blue,font=\itshape] {0};
    \node [on chain=10,mytape,blue,font=\itshape] {1};
    \node [on chain=10,mytape,blue,font=\itshape] {0} ;
    \node [on chain=10,mytape,blue,font=\itshape] {0};
    \node [on chain=10,mytape,blue,font=\itshape] {1};
    \node [on chain=10,mytape,blue,font=\itshape] {0};
    \node (id10) [on chain=10,mytape,blue,font=\itshape] {0};
    \node [below =of id10,node distance=0.7cm,blue,font=\itshape] {Id:10};
\end{tikzpicture}}}  child {node{\begin{tikzpicture}[node distance=0cm]
    \tikzstyle{mytape}=[draw,minimum size=\sizetape]
    \node [on chain=4,mytape] {0};
         \node [on chain=4,mytape] {0};
    \node [on chain=4,mytape] {0};
    \node [on chain=4,mytape] {1} ;
    \node [on chain=4,mytape] {0};
    \node [on chain=4,mytape] {0};
    \node [on chain=4,mytape] {0};
    \node (id4) [on chain=4,mytape] {0};
    \node [below =of id4,node distance=0.7cm] {Id:4};
\end{tikzpicture}}} }
child{node {\begin{tikzpicture}[node distance=0cm]
    \tikzstyle{mytape}=[draw,minimum size=\sizetape]
    \node (id8)[on chain=8,mytape] {0};
    \node [on chain=8,mytape] {0};
    \node [on chain=8,mytape] {0};
    \node [on chain=8,mytape] {0} ;
    \node [on chain=8,mytape] {1};
    \node [on chain=8,mytape] {0};
    \node [on chain=8,mytape] {1};
    \node [on chain=8,mytape] {0};
     \node  [above of= id8,node distance=0.4cm] {Id:8};
\end{tikzpicture}} child {node{\begin{tikzpicture}[node distance=0cm]
    \tikzstyle{mytape}=[draw,minimum size=\sizetape]
    \node [on chain=5,mytape] {0};
         \node [on chain=5,mytape] {0};
    \node [on chain=5,mytape] {0};
    \node [on chain=5,mytape] {0} ;
    \node [on chain=5,mytape] {1};
    \node [on chain=5,mytape] {0};
    \node [on chain=5,mytape] {0};
    \node (id5) [on chain=5,mytape] {0};
    \node [below =of id5,node distance=0.7cm] {Id:5};
\end{tikzpicture}}}  child {node{\begin{tikzpicture}[node distance=0cm]
    \tikzstyle{mytape}=[draw,minimum size=\sizetape]
    \node [on chain=6,mytape] {0};
         \node [on chain=6,mytape] {0};
    \node [on chain=6,mytape] {0};
    \node [on chain=6,mytape] {0} ;
    \node [on chain=6,mytape] {0};
    \node [on chain=6,mytape] {0};
    \node [on chain=6,mytape] {1};
    \node (id6) [on chain=6,mytape] {0};
    \node [below =of id6,node distance=0.7cm] {Id:6};
\end{tikzpicture}}} };
\end{tikzpicture}}
\caption{Bloofi Tree After Insert and Split}
\label{fig:bloofiTree_insert}
\end{figure}
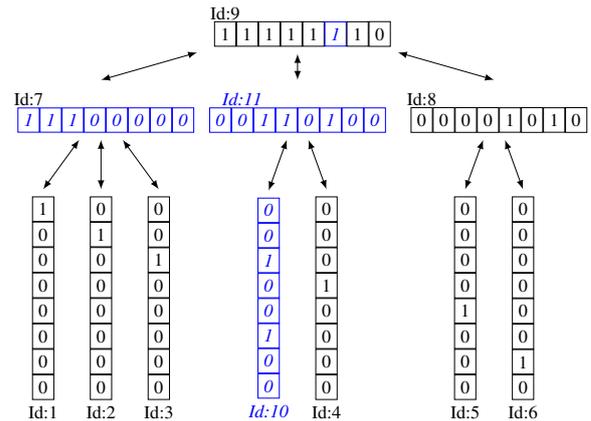

\textbf{Example.} Consider the Bloofi tree in Fig.~\ref{fig:bloofiTree} and assume that we insert the Bloom filter with value ``00100100''. The new node is inserted as a child of node 7 (Hamming distance between the new node and nodes 7 and 8 is 4, so let's assume that node 7 is chosen as the closest node), which needs to split. The resulting Bloofi tree is shown in Fig.~\ref{fig:bloofiTree_insert}.

Theorem~\ref{theorem:insertCost} gives the cost of the insert algorithm for Bloofi. The cost metric we use to measure the performance of an operation in Bloofi is the number of Bloofi nodes accessed by that operation: either the Bloom filter value is read/modified by that operation, or the parent or children pointers in the node are read/modified.

\begin{theorem}[Insert Cost]\label{theorem:insertCost}
The number of Bloofi nodes accessed during the insert operation in Bloofi is $O(d \log_{d} N)$, where $d$ is the order of the Bloofi index and $N$ is the number of Bloom filters that are indexed.
\end{theorem}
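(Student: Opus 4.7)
The plan is to bound separately the number of nodes touched on the downward (insertion) phase and the number touched on the upward (split-propagation) phase, and then sum the two contributions.

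First I would invoke the already-stated bound on the height: since Bloofi is a balanced tree in which every non-root internal node has at least $d$ children, its height is at most $\lfloor \log_d N \rfloor$. On the way down, Algorithm~\ref{alg:insert} visits one node per level. At each visited internal node it does two things that cost node accesses: it rewrites \text{node.val} by OR-ing in the new filter (one access to the current node), and it examines each child in order to find the one that minimises the distance~$D$ to the new filter. Since every internal node has at most $2d$ children, the latter costs at most $2d$ child-node accesses. Thus the descent accesses $O(d)$ nodes per level, for a total of $O(d \log_d N)$ over the whole path from root to leaf, plus the constant cost of allocating the new leaf.

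Next I would bound the cost of the upward phase carried out by Algorithm~\ref{alg:insertEntry}. A single call touches the parent (to insert the new sibling and possibly re-OR its value) and, on an overflow, creates a new sibling into which the last $d$ children of the parent are moved; recomputing both \text{val}s and repatching parent pointers costs $O(d)$ node accesses. In the worst case splits cascade from the leaf level all the way to the root, so \texttt{insertIntoParent} is invoked at most $O(\log_d N)$ times, contributing an additional $O(d \log_d N)$ node accesses; the possible creation of a new root in line~\ref{line:newroot} adds only $O(1)$. Adding the two phases gives the claimed $O(d \log_d N)$ bound.

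The only delicate point, and what I would treat most carefully, is making sure that ``node access'' is defined consistently with the cost metric introduced right before the theorem (reading/writing \text{val} or touching a parent/children pointer counts as one), so that the $O(d)$ work per split is not silently undercounted: each of the $d$ children moved to the new sibling must have its parent pointer updated, and the two \text{val}s of the split siblings must be recomputed from their children. Once that bookkeeping is fixed, the remaining argument is just the two-term sum above, and the root-creation branch is easily seen to be absorbed into the constant.
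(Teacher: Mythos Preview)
Your proposal is correct and follows essentially the same approach as the paper's proof. The paper decomposes the cost into three items---(1) updating values along the root-to-leaf path, (2) the per-level similarity search among at most $2d$ children, and (3) the cascading splits---whereas you fold (1) and (2) together into a single ``downward phase'' and treat (3) as the ``upward phase''; but the bounds and the underlying reasoning are identical, and your extra care about what counts as a node access is, if anything, more explicit than the paper's.
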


\begin{proof}
The following components are part of an insert operation:
\begin{enumerate}
\item  The values of all nodes in the path from the root to the new leaf are updated to reflect the newly inserted Bloom filter. Updating the value of a node means computing the OR with the newly inserted Bloom filter, so only the old and the new Bloom filter values need to be accessed. The cost of the update is therefore constant (2). The height of Bloofi tree is at most $\lfloor log_{d} N \rfloor$ and at each level we perform a constant amount of work, so the total cost for that update is $O(\log_{d} N)$.
\item  At each level in the tree, a search for the most similar child node is performed (line 9), and the cost of that search is $O(d)$, so total cost due to the search for the placement of the new leaf is $O(d \log_d N)$.
\item  The cost of a split is $O(d)$ since there are at most $2d+1$ children for a node. In the worst case, the split propagates up to the root, and the height of the tree is $O(\log_d N)$, so the worst case cost for the split operations is $O(d \log_d N)$.
\end{enumerate}
 From the three points, we see that the cost of the insert operation is $O(d \log_d N)$.\end{proof}

\subsection{Delete}\label{sec:delete}

The delete algorithm (Algorithm~\ref{alg:delete}) deletes the given node from the Bloofi index. When the procedure is first invoked with the leaf to be deleted as argument, the pointer to that leaf in the parent is deleted. If the parent node is not underflowing (at least $d$ children are left), the Bloom filter values of the nodes in the path from the parent node to the root are re-computed to be the bitwise OR of their remaining children (line~\ref{line:recompute}) and the delete procedure terminates. If there is an underflow (lines~\ref{line:underflowstart}--\ref{line:underflowfinish}), the parent node tries to redistribute its entries with a sibling. If redistribution is possible, the entries are re-distributed, parent information in the moving nodes is updated, and the Bloom filter values in the parent node, sibling node, and all the way up to the root are updated (lines~\ref{line:redisstart}--\ref{line:redisfinish}). If redistribution is not possible, the parent node merges with a sibling, by giving all its entries to the sibling (lines~\ref{line:noredisstart}--\ref{line:noredisfinish}). The Bloom filter value in the sibling node is updated, and the delete procedure is called recursively for the parent node. Occasionally, the delete propagates up to the root. If only one child remains in the root, the root is deleted and the height of the tree decreases (lines~\ref{line:rootstart}--\ref{line:rootfinish}).

\textbf{Example.} Assume that the node with id 5 and value ``00001000'' is deleted from Fig.~\ref{fig:bloofiTree}. The resulting tree, after deletion of node 5 and redistribution between 8 and 7 is shown in Fig.~\ref{fig:bloofiTree_delete}.

\begin{theorem}[Delete Cost]\label{theorem:deleteCost}
The number of Bloofi nodes accessed during the delete operation in Bloofi is $O(d \log_{d} N)$, where $d$ is the order of the Bloofi index and $N$ is the number of Bloom filters that are indexed.
\end{theorem}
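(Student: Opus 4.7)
The plan is to mirror the bookkeeping used in the proof of Theorem~\ref{theorem:insertCost}, decomposing the delete into a constant number of components and bounding each by $O(d \log_d N)$. The two structural facts I rely on are that the Bloofi tree has height at most $\lfloor \log_d N \rfloor$ and that every internal node stores at most $2d$ children, so any strictly local per-level computation is $O(d)$.

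First I would account for the recomputation of Bloom filter values along the path from the parent of the removed leaf up to the root. In the non-underflow branch (line~\ref{line:recompute}), each ancestor has its value recomputed as the bitwise OR of its (at most $2d$) remaining children, which is $O(d)$ per node. Summed over the $O(\log_d N)$ ancestors, this contributes $O(d \log_d N)$.

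Next I would treat the underflow branch (lines~\ref{line:underflowstart}--\ref{line:underflowfinish}). At each affected level the algorithm either redistributes with a sibling (lines~\ref{line:redisstart}--\ref{line:redisfinish}), touching the two siblings' children (at most $2d$ of them), updating parent pointers of the moved nodes, and re-ORing the two affected siblings; or it merges with a sibling (lines~\ref{line:noredisstart}--\ref{line:noredisfinish}), moving at most $d$ children and then recursing on the parent. In either case the per-level cost is $O(d)$, and the propagation up the tree visits at most $O(\log_d N)$ levels, giving $O(d \log_d N)$ in total. The root-shrinking case (lines~\ref{line:rootstart}--\ref{line:rootfinish}) contributes $O(1)$ and happens at most once. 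Adding the three components yields the stated bound.

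The main obstacle I anticipate is charging the cascade of merges correctly: a naive reading might suggest that each merge triggers a fresh root-ward re-OR sweep, inflating the bound to $O(d \log_d^2 N)$. The resolution is that once the recursive delete fires on the parent, the remaining root-ward work is exactly the work that the outer accounting would have attributed to the levels above the merge, so across the whole execution each level is visited $O(1)$ times and contributes $O(d)$; summing (not multiplying) the per-level costs delivers the claimed $O(d \log_d N)$.
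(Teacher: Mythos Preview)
Your proposal is correct and follows essentially the same two-part decomposition as the paper's proof: (1) the root-ward recomputation of node values contributes $O(d \log_d N)$, and (2) the per-level cost of a merge or redistribute is $O(d)$, which in the worst case propagates up through $O(\log_d N)$ levels. Your treatment is in fact more careful than the paper's, which does not explicitly address the double-counting concern you raise about cascading merges each appearing to trigger a fresh root-ward sweep; the paper simply asserts the $O(d \log_d N)$ total for the merge cascade.
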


\begin{proof}
\begin{inparaenum}[(1)]\item Once the reference to a node is deleted from its parent in the Bloofi tree (constant cost operation), the values of the nodes from the deleted node to the root need to be recomputed, so the total cost is $O(d \log_d N)$. \item Occasionally, a redistribute or merge is needed, with a cost in $O(d)$. In the worst case, the merge and delete propagates up to the root, so the worst case cost for merge is $O(d \log_d N)$.
\end{inparaenum}
From (1) and (2) it follows that the cost of the delete operation is $O(d \log_d N)$.
\end{proof}

When subjected to  many insertions and
deletions, though the Bloofi tree remains balanced, it could be that
the partition of the Bloom filters as per their Hamming distance could
degrade in quality. The performance of Bloofi could diminish in such
instances and it could become necessary to reconstruct the Bloofi data structure.

\begin{algorithm}[htbp]
\caption{: \texttt{delete}($\text{childNode}$)}\label{alg:delete}
\begin{algorithmic}[1]
\STATE //find the parent node
\STATE $\text{parentNode} = \text{childNode.parent}$;
\STATE //remove the reference to the node from its parent
\STATE $\text{parentNode.children.remove}(\text{childNode})$;
\STATE //check whether the tree height needs to be reduced
\IF{$\text{parentNode} = \text{root}$ AND $\text{parentNode.nbDesc} = 1$}
\STATE $\text{root} = \text{parentNode.children.get}(0)$;~\label{line:rootstart}
\STATE $\text{root.parent} = \text{null}$;
\RETURN $\text{null}$~\label{line:rootfinish}
\ENDIF
\STATE //if not, check if underflow at the parent
\IF{$\text{parentNode.underflow}$} \label{line:underflowstart}
\STATE //underflow, so try to redistribute first
\STATE $\text{sibling} = \text{sibling of parentNode}$;
\IF{$\text{sibling}.\text{canRedistribute}$}
\STATE //redistribute with sibling\label{line:redisstart}
\STATE remove some children from $\text{sibling}$ to even out the number of children
\STATE insert new children into $\text{parentNode}$
\STATE update $.\text{parent}$ information for all nodes moved
\STATE //update value of all nodes involved, up to the root
\STATE $\text{sibling.val}=$  OR of all children value;
\STATE \texttt{recomputeValueToTheRoot}$(\text{parentNode})$;\label{line:redisfinish}
\ELSE
\STATE //merge with sibling\label{line:noredisstart}
\STATE move all children from $\text{parentNode}$ to $\text{sibling}$;
\STATE update $.\text{parent}$ information for all nodes moved
\STATE //recompute $\text{sibling}$ value
\STATE $\text{sibling.val} = $OR of all childrenValue
\STATE //delete the $\text{parentNode}$
\STATE \texttt{delete}($\text{parentNode}$);\label{line:noredisfinish} \label{line:underflowfinish}
\ENDIF //merge or redistribute
\ELSE
\STATE //no underflow
\STATE \label{line:recompute} //re-compute the value of all Bloom filters up to the root
\STATE \texttt{recomputeValueToTheRoot}$(\text{parentNode})$;
\ENDIF
\end{algorithmic}
\end{algorithm}

\begin{figure}[htbp]
\centering
\begin{tikzpicture}[edge from parent/.style={draw,latex-latex},
scale=0.9, every node/.style={scale=0.9},
start chain=1 going below,
start chain=2 going below,
start chain=3 going below,
start chain=4 going below,
start chain=6 going below,
start chain=7 going right,
start chain=8 going right,
start chain=9 going right,
child anchor=north,
level distance=1.5cm,
level 1/.style={sibling distance=42mm},
level 2/.style={sibling distance=10mm,level distance=3cm}]
\edef\sizetape{0.2cm}
\node{%
\begin{tikzpicture}[node distance=0cm]
    \tikzstyle{mytape}=[draw,minimum size=\sizetape]
    \node (id9) [on chain=9,mytape,blue,font=\itshape] {1};
    \node [on chain=9,mytape,blue,font=\itshape] {1};
    \node [on chain=9,mytape,blue,font=\itshape] {1};
    \node [on chain=9,mytape,blue,font=\itshape] {1};
    \node [on chain=9,mytape,blue,font=\itshape] {0};
        \node [above of=id9,node distance=0.4cm] {Id:9};
    \node [on chain=9,mytape,blue,font=\itshape] {0};
    \node [on chain=9,mytape,blue,font=\itshape] {1};
    \node [on chain=9,mytape,blue,font=\itshape] {0};
\end{tikzpicture}
}
child{node {\begin{tikzpicture}[node distance=0cm]
    \tikzstyle{mytape}=[draw,minimum size=\sizetape]
    \node (id7) [on chain=7,mytape,blue,font=\itshape] {1};
      \node [left =of id7,node distance=0.6cm] {Id:7};
    \node [on chain=7,mytape,blue,font=\itshape] {1};
    \node [on chain=7,mytape,blue,font=\itshape] {1};
    \node [on chain=7,mytape,blue,font=\itshape] {0};
    \node [on chain=7,mytape,blue,font=\itshape] {0};
    \node [on chain=7,mytape,blue,font=\itshape] {0};
    \node [on chain=7,mytape,blue,font=\itshape] {0};
    \node [on chain=7,mytape,blue,font=\itshape] {0};
\end{tikzpicture}
}
child {node{\begin{tikzpicture}[node distance=0cm]
    \tikzstyle{mytape}=[draw,minimum size=\sizetape]
    \node [on chain=1,mytape] {1};
    \node [on chain=1,mytape] {0};
    \node [on chain=1,mytape] {0};
    \node [on chain=1,mytape] {0} ;
    \node [on chain=1,mytape] {0};
    \node [on chain=1,mytape] {0};
    \node [on chain=1,mytape] {0};
    \node (id1)  [on chain=1,mytape] {0};
      \node [below =of id1,node distance=0.7cm] {Id:1};
\end{tikzpicture}}} child {node{\begin{tikzpicture}[node distance=0cm]
    \tikzstyle{mytape}=[draw,minimum size=\sizetape]
    \node [on chain=2,mytape] {0};
         \node [on chain=2,mytape] {1};
    \node [on chain=2,mytape] {0};
    \node [on chain=2,mytape] {0} ;
    \node [on chain=2,mytape] {0};
    \node [on chain=2,mytape] {0};
    \node [on chain=2,mytape] {0};
    \node (id2) [on chain=2,mytape] {0};
    \node [below =of id2,node distance=0.7cm] {Id:2};
\end{tikzpicture}}} child {node{\begin{tikzpicture}[node distance=0cm]
    \tikzstyle{mytape}=[draw,minimum size=\sizetape]
    \node [on chain=3,mytape] {0};
         \node [on chain=3,mytape] {0};
    \node [on chain=3,mytape] {1};
    \node [on chain=3,mytape] {0} ;
    \node [on chain=3,mytape] {0};
    \node [on chain=3,mytape] {0};
    \node [on chain=3,mytape] {0};
    \node (id3) [on chain=3,mytape] {0};
    \node [below =of id3,node distance=0.7cm] {Id:3};
\end{tikzpicture}}} }
child{node {\begin{tikzpicture}[node distance=0cm]
    \tikzstyle{mytape}=[draw,minimum size=\sizetape]
    \node [on chain=8,mytape,blue, font=\itshape] {0};
    \node [on chain=8,mytape,blue, font=\itshape] {0};
    \node [on chain=8,mytape,blue, font=\itshape] {0};
    \node [on chain=8,mytape,blue, font=\itshape] {1} ;
    \node [on chain=8,mytape,blue, font=\itshape] {0};
    \node [on chain=8,mytape,blue, font=\itshape] {0};
    \node [on chain=8,mytape,blue, font=\itshape] {1};
    \node (id8) [on chain=8,mytape,blue, font=\itshape] {0};
     \node  [right of= id8,node distance=0.6cm] {Id:8};
\end{tikzpicture}}
child {node{\begin{tikzpicture}[node distance=0cm]
    \tikzstyle{mytape}=[draw,minimum size=\sizetape]
    \node [on chain=4,mytape] {0};
         \node [on chain=4,mytape] {0};
    \node [on chain=4,mytape] {0};
    \node [on chain=4,mytape] {1} ;
    \node [on chain=4,mytape] {0};
    \node [on chain=4,mytape] {0};
    \node [on chain=4,mytape] {0};
    \node (id4) [on chain=4,mytape] {0};
    \node [below =of id4,node distance=0.6cm] {Id:4};
\end{tikzpicture}}}
child {node{\begin{tikzpicture}[node distance=0cm]
    \tikzstyle{mytape}=[draw,minimum size=\sizetape]
    \node [on chain=6,mytape] {0};
         \node [on chain=6,mytape] {0};
    \node [on chain=6,mytape] {0};
    \node [on chain=6,mytape] {0} ;
    \node [on chain=6,mytape] {0};
    \node [on chain=6,mytape] {0};
    \node [on chain=6,mytape] {1};
    \node (id6) [on chain=6,mytape] {0};
    \node [below =of id6,node distance=0.7cm] {Id:6};
\end{tikzpicture}}} };
\end{tikzpicture}

\caption{Bloofi Tree After Delete and Redistribute}
\label{fig:bloofiTree_delete}
\end{figure}
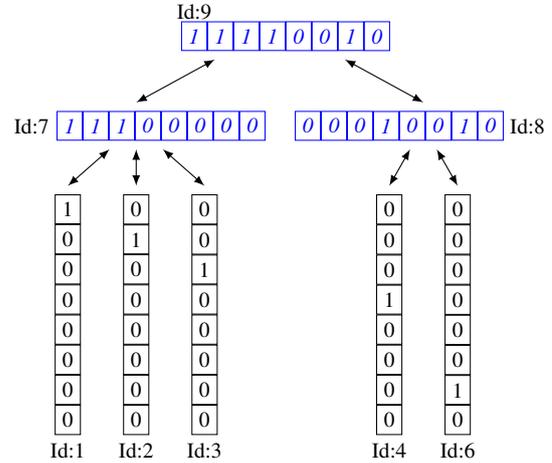

\begin{algorithm}[htbp]
\caption{: \texttt{update}(leaf,newBloomFilter)}\label{alg:update}
\begin{algorithmic}[1]
\STATE //update all values on the path from leaf to the root
\STATE $\text{node} = \text{leaf}$;
\REPEAT
\STATE $\text{node.val} = \text{node.val} \, \big | \, \text{newBloomFilter}$;
\STATE $\text{node} = \text{node.parent}$;
\UNTIL{$\text{node} = \text{null}$}
\end{algorithmic}
\end{algorithm}

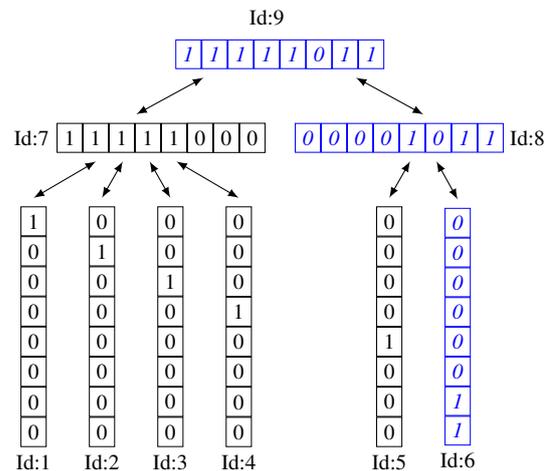
\begin{figure}[htbp]
\centering
\begin{tikzpicture}[edge from parent/.style={draw,latex-latex},
scale=0.9, every node/.style={scale=0.9},
start chain=1 going below,
start chain=2 going below,
start chain=3 going below,
start chain=4 going below,
start chain=5 going below,
start chain=6 going below,
start chain=7 going right,
start chain=8 going right,
start chain=9 going right,
child anchor=north,
level distance=1.5cm,
level 1/.style={sibling distance=42mm},
level 2/.style={sibling distance=10mm,level distance=3cm}]
\edef\sizetape{0.2cm}
\node{%
\begin{tikzpicture}[node distance=0cm]
    \tikzstyle{mytape}=[draw,minimum size=\sizetape]
    \node  [on chain=9,mytape,blue,font=\itshape] {1};
    \node [on chain=9,mytape,blue,font=\itshape] {1};
    \node [on chain=9,mytape,blue,font=\itshape] {1};
    \node(id9) [on chain=9,mytape,blue,font=\itshape] {1};
    \node [on chain=9,mytape,blue,font=\itshape] {1};
        \node [above of=id9,node distance=0.6cm] {Id:9};
    \node [on chain=9,mytape,blue,font=\itshape] {0};
    \node [on chain=9,mytape,blue,font=\itshape] {1};
    \node [on chain=9,mytape,blue,font=\itshape] {1};
\end{tikzpicture}
}
child{node {\begin{tikzpicture}[node distance=0cm]
    \tikzstyle{mytape}=[draw,minimum size=\sizetape]
    \node (id7) [on chain=7,mytape] {1};
      \node [left =of id7,node distance=0.6cm] {Id:7};
    \node [on chain=7,mytape] {1};
    \node [on chain=7,mytape] {1};
    \node [on chain=7,mytape] {1} ;
    \node [on chain=7,mytape] {1};
    \node [on chain=7,mytape] {0};
    \node [on chain=7,mytape] {0};
    \node [on chain=7,mytape] {0};
\end{tikzpicture}
}
child {node{\begin{tikzpicture}[node distance=0cm]
    \tikzstyle{mytape}=[draw,minimum size=\sizetape]
    \node [on chain=1,mytape] {1};
    \node [on chain=1,mytape] {0};
    \node [on chain=1,mytape] {0};
    \node [on chain=1,mytape] {0} ;
    \node [on chain=1,mytape] {0};
    \node [on chain=1,mytape] {0};
    \node [on chain=1,mytape] {0};
    \node (id1)  [on chain=1,mytape] {0};
      \node [below =of id1,node distance=0.7cm] {Id:1};
\end{tikzpicture}}} child {node{\begin{tikzpicture}[node distance=0cm]
    \tikzstyle{mytape}=[draw,minimum size=\sizetape]
    \node [on chain=2,mytape] {0};
         \node [on chain=2,mytape] {1};
    \node [on chain=2,mytape] {0};
    \node [on chain=2,mytape] {0} ;
    \node [on chain=2,mytape] {0};
    \node [on chain=2,mytape] {0};
    \node [on chain=2,mytape] {0};
    \node (id2) [on chain=2,mytape] {0};
    \node [below =of id2,node distance=0.7cm] {Id:2};
\end{tikzpicture}}} child {node{\begin{tikzpicture}[node distance=0cm]
    \tikzstyle{mytape}=[draw,minimum size=\sizetape]
    \node [on chain=3,mytape] {0};
         \node [on chain=3,mytape] {0};
    \node [on chain=3,mytape] {1};
    \node [on chain=3,mytape] {0} ;
    \node [on chain=3,mytape] {0};
    \node [on chain=3,mytape] {0};
    \node [on chain=3,mytape] {0};
    \node (id3) [on chain=3,mytape] {0};
    \node [below =of id3,node distance=0.7cm] {Id:3};
\end{tikzpicture}}} child {node{\begin{tikzpicture}[node distance=0cm]
    \tikzstyle{mytape}=[draw,minimum size=\sizetape]
    \node [on chain=4,mytape] {0};
         \node [on chain=4,mytape] {0};
    \node [on chain=4,mytape] {0};
    \node [on chain=4,mytape] {1} ;
    \node [on chain=4,mytape] {0};
    \node [on chain=4,mytape] {0};
    \node [on chain=4,mytape] {0};
    \node (id4) [on chain=4,mytape] {0};
    \node [below =of id4,node distance=0.6cm] {Id:4};
\end{tikzpicture}}} }
child{node {\begin{tikzpicture}[node distance=0cm]
    \tikzstyle{mytape}=[draw,minimum size=\sizetape]
    \node [on chain=8,mytape,blue,font=\itshape] {0};
    \node [on chain=8,mytape,blue,font=\itshape] {0};
    \node [on chain=8,mytape,blue,font=\itshape] {0};
    \node [on chain=8,mytape,blue,font=\itshape] {0} ;
    \node [on chain=8,mytape,blue,font=\itshape] {1};
    \node [on chain=8,mytape,blue,font=\itshape] {0};
    \node [on chain=8,mytape,blue,font=\itshape] {1};
    \node (id8) [on chain=8,mytape,blue,font=\itshape] {1};
     \node  [right of= id8,node distance=0.6cm] {Id:8};
\end{tikzpicture}} child {node{\begin{tikzpicture}[node distance=0cm]
    \tikzstyle{mytape}=[draw,minimum size=\sizetape]
    \node [on chain=5,mytape] {0};
         \node [on chain=5,mytape] {0};
    \node [on chain=5,mytape] {0};
    \node [on chain=5,mytape] {0} ;
    \node [on chain=5,mytape] {1};
    \node [on chain=5,mytape] {0};
    \node [on chain=5,mytape] {0};
    \node (id5) [on chain=5,mytape] {0};
    \node [below =of id5,node distance=0.7cm] {Id:5};
\end{tikzpicture}}}  child {node{\begin{tikzpicture}[node distance=0cm]
    \tikzstyle{mytape}=[draw,minimum size=\sizetape]
    \node [on chain=6,mytape,blue,font=\itshape] {0};
    \node [on chain=6,mytape,blue,font=\itshape] {0};
    \node [on chain=6,mytape,blue,font=\itshape] {0};
    \node [on chain=6,mytape,blue,font=\itshape] {0} ;
    \node [on chain=6,mytape,blue,font=\itshape] {0};
    \node [on chain=6,mytape,blue,font=\itshape] {0};
    \node [on chain=6,mytape,blue,font=\itshape] {1};
    \node (id6) [on chain=6,mytape,blue,font=\itshape] {1};
    \node [below =of id6,node distance=0.7cm] {Id:6};
\end{tikzpicture}}} };
\end{tikzpicture}
\caption{Bloofi Tree After Update}
\label{fig:bloofiTree_update}
\end{figure}

\subsection{Update}\label{sec:update}

Object insertions in the underlying set lead to updates of the Bloom filters, so we expect the update operation for Bloofi to be quite frequent. Instead of treating a Bloom filter update as a delete followed by insert, we use an ``in-place'' update. If the Bloofi tree becomes inefficient in routing due to updates (too many false positives during search) the Bloofi tree can be reconstructed from scratch in batch mode.
Algorithm~\ref{alg:update} shows the pseudo-code for the update algorithm. The algorithm takes as parameters the leaf node corresponding to the updated value and the new Bloom filter value for that node. All the Bloom filters in the path from the leaf to the root are updated by OR-ing with the new value.

\textbf{Example.} In the Bloofi tree in Fig.~\ref{fig:bloofiTree}, assume that we update the value of node 6 to be ``00000011''. The values of all the nodes in the path from node 6 to the root are OR-ed with ``00000011'' and the resulting tree is shown in Fig.~\ref{fig:bloofiTree_update}.

\begin{theorem}[Update Cost]\label{theorem:updateCost}
The number of Bloofi nodes accessed during the update operation in Bloofi is $O(\log_{d} N)$, where $d$ is the order of the Bloofi index and $N$ is the number of Bloom filters that are indexed.
\end{theorem}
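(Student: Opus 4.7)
The plan is to argue that the update procedure only touches nodes along a single root-to-leaf path, and that this path has length bounded by the height of the Bloofi tree, which was established earlier to be at most $\lfloor \log_d N \rfloor$.

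First I would observe, directly from Algorithm~\ref{alg:update}, that the loop begins at the given leaf and advances via \texttt{node.parent} until the root is reached (signalled by \texttt{node = null}). Thus the set of nodes touched is exactly the unique ancestor chain of the leaf, which has size at most $\text{height}(\text{Bloofi}) + 1$. Since the earlier structural claim guarantees $\text{height}(\text{Bloofi}) \leq \lfloor \log_d N \rfloor$, this chain has $O(\log_d N)$ nodes.

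Next I would account for the per-node work in the cost model declared just before Theorem~\ref{theorem:insertCost}, where one access means reading/modifying the Bloom filter value or reading/modifying a parent/child pointer. At each node on the path the algorithm performs a bitwise OR between \texttt{node.val} and \texttt{newBloomFilter} (constant number of value accesses, independent of $d$ and $N$ in our cost model, since it manipulates only the current node and the fixed input filter) and then follows one parent pointer. This is a constant number of node accesses per level, with no search among siblings and no structural rebalancing (in contrast to \texttt{insert} and \texttt{delete}, which is why the $d$ factor disappears here).

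Multiplying $O(1)$ work per level by $O(\log_d N)$ levels yields the stated bound. There is no real obstacle: the only subtlety worth flagging is that the claim implicitly assumes the Bloofi tree is balanced with branching factor at least $d$, which is maintained by the insert/delete algorithms and hence holds at the time an update is issued; without this invariant, the height bound, and therefore the cost bound, would not follow.
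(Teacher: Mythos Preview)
Your argument is correct and is exactly the intended one: the paper actually states Theorem~\ref{theorem:updateCost} without proof, treating it as immediate from Algorithm~\ref{alg:update} and the height bound $\lfloor \log_d N \rfloor$. Your decomposition---single leaf-to-root path, constant work per node, no sibling scan or rebalancing---makes explicit precisely what the paper leaves implicit, and also correctly explains why the $d$ factor present in Theorems~\ref{theorem:insertCost} and~\ref{theorem:deleteCost} drops out here.
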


\subsection{Improving pruning efficiency}\label{sec:noSplitFull}

In Bloofi, each non-leaf node value is the bitwise OR of its children values. As the total number of objects in the underlying sets indexed by Bloofi increases, the probability of false positive results returned by the Bloom filters at the higher levels in the tree increases. In the worst case, all bits in the Bloofi nodes at higher levels in the tree could be one. This leads to decreased pruning efficiency of the higher levels in the tree, as more false positive paths are followed during search. To improve upon the number of Bloom filters that need to be checked for matches during a query, we propose the following heuristic.

During the insert procedure, we do not split a node that has all the bits set to one, even if the node is full. This could stop the splitting a little too early, but it avoids creating multiple levels in the tree with all the bits set to one. Our experimental results in Section~\ref{sec:performance} show that the search cost is indeed improved by using this heuristic when the root level value has all bits set to one.
Effectively, Bloofi can be viewed as a \emph{forest} instead of a tree.

Alternatively, we could dynamically change the size of the Bloom filters when the false positive probability at the root reaches 1. In such a case, if the application allows, we could reconstruct the base Bloom filters to have a lower false positive probability, and reconstruct the Bloofi tree from bottom-up. The index construction time for \num{100000}~Bloom filters was only about 15 seconds in our experiments, so periodic reconstruction of the index is a viable solution.

\section{Bit-level parallelism}
\label{sec:flat}

Bloofi keeps the Bloom filters as they are, and only adds new (aggregated) Bloom filters to accelerate the queries. However, in a worst case scenario, Bloofi may need to check many Bloom filters. In such a case, Bloofi may not be faster, and could even be slower, than a naive approach which merely checks every Bloom filter.

Checking for membership in a Bloom filter is equivalent to checking the value of a few bits at random locations in a bitmap. Though fast, this operation does not exploit bit-level parallelism: the processor's ability to do several bitwise operations in one instruction.

Let us assume that we have a 64-bit processor.
We propose a new approach (Flat-Bloofi) which stores the data corresponding to 64~Bloom filters in a packed data structure.
Each Bloom filter is backed by a $m$-bit bitmap. In their place, we construct a single array of 64-bit integers of length $m$ (henceforth a Flat-Bloofi array).
 The first 64-bit integer corresponds to the first bit of each of the 64~bitmaps. And so on. Thus, the value of the $i^{\mathrm{th}}$~bit of the $j^{\mathrm{th}}$~bitmap is the value of the $j^{\mathrm{th}}$~bit of the $i^{\mathrm{th}}$~integer in Flat-Bloofi array. Given $N$~Bloom filters, we create $\lceil N / 64 \rceil$~Flat-Bloofi arrays.
 When $N$ is not a multiple of 64, some bits are unused in one of the Flat-Bloofi arrays.

We organize Flat-Bloofi using the following data structures:
\begin{itemize}
\item We maintain $\zeta$~Flat-Bloofi arrays.
With this data, we can index $L=\zeta \times 64$~Bloom filters.

\item We use an array $\beta$ of  $L$~bits, of which exactly $N$ are set to true. This indicates which index locations are in use. Thus if a Bloom filter is deleted from the index and then a new one inserted, we can reuse the space.
\item We use a hash table which maps from Bloom filter identifiers to internal index values (in the range $[0,L)$).
We also maintain an array of identifiers of length up to $L$ which gives us the identifier of the Bloom filter \emph{stored} at a given index. Combined together, the hash table and the array of identifiers provide
a two-way index from Bloom filter identifiers to index locations.
\end{itemize}

\paragraph{Queries} Given $k$~hash functions, we map a given value to $k$~index locations in the range $[0,m)$. For each Flat-Bloofi array, we retrieve the corresponding $k$~64-bit integers and compute their bit-wise AND aggregate.
We then iterate over the bits having a value of true: each one corresponds to a matching Bloom filter.
We use our array of identifiers to recover the corresponding Bloom filter identifiers. Thus if we have
$N$~Bloom filters, we will access no more than $k \times \zeta $~64-bit integers from the Flat-Bloofi arrays. Iterating through the set bits can be done quickly using fast functions such as Java's \texttt{Long.bitCount}.

\paragraph{Insertion} When inserting a new Bloom filter, using the bit array $\beta$, we first seek an available index. If none is found, then we create a new array of 64-bit integers, thus, effectively, making available 64~new index positions. The bit array $\beta$, the array of identifiers, and the hash table mapping to index values are updated. Finally, we iterate through all of the set bits in the bitmap of the new Bloom filter and set the corresponding bits in the Flat-Bloofi array using a bit-wise OR operation.

\paragraph{Deletion}   When deleting a Bloom filter,  we use the hash table to recover
its index using its identifier. The key is then removed from the hash table.
The corresponding bit in  $\beta$ is set to false.
There are then two possibilities.
\begin{itemize}
\item  If this Bloom filter was stored alone in a Flat-Bloofi array,
then the Flat-Bloofi array is removed. We also remove the corresponding 64~bits in  $\beta$ as well
as the corresponding 64~entries in the array of identifiers. We scan the values  of the
hash table, and deduct 64 from all index entries exceeding index of the deleted Bloom filter.
\item Otherwise, we go through the Flat-Bloofi array and unset the bits corresponding
to the Bloom filter using a bit-wise AND operation. Because we do not
keep a copy of the original Bloom filter, we need to update every single
component of the Flat-Bloofi array.
\end{itemize}
We believe that this compaction approach should provide reasonable performance and memory usage in a context where deletions and insertions are frequent. In the worst case scenario, however, and after many deletions, we could have $N$~Flat-Bloofi arrays indexing $N$~Bloom filters. To guard against such inefficiencies, we would need more aggressive compaction strategies, but we leave them to future work.

Note
We could further accelerate the queries by replacing many  ($\zeta$) Flat-Bloofi arrays, with  a single array containing words of $\zeta \times 64$~bits. This would improve memory locality. However, it would make compaction more expensive.

\paragraph{Update} Updating a Bloom filter is easiest: we  go through the corresponding
Flat-Bloofi array and set the corresponding bits as we did with an insertion.

\section{Experimental evaluation}\label{sec:experiments}
We evaluate Bloofi's and Flat-Bloofi's search performance and maintenance cost for different number and size of Bloom filters, different underlying data distributions, and, for Bloofi, different similarity metrics used for Bloofi construction and different order values. We also compare Bloofi and Flat-Bloofi's performance with the ``naive'' case, where the Bloom filters are searched linearly, without any index. We show that in most cases, Bloofi achieves logarithmic search performance, with low maintenance cost, regardless of the underlying data distribution.

\subsection{Experiments setup}
We implemented Bloofi and Flat-Bloofi in Java and ran our experiments using  Oracle's JDK version 1.7.0\_45. For Bloom filters, we use the implementation provided by Skjegstad~\cite{bloomFilterMagnus}, modified to use  faster hashing and a more efficient BitSet implementation~\cite{bitSetLemire}. The experiments were run on a HP~Z820 Workstation with an Intel Xeon E5-2640 processor (2.50\,GHz) having 6~cores. Our test machine had
 32\,GB of RAM (DDR3-1600, $4\times 8$\,GB). We did not use parallelism and all our data structures are in RAM\@.

 \subsubsection{Performance metrics}
 As performance metrics we use:
  \begin{itemize}
  \item \emph{search bf-cost}: the number of Bloom filters checked to find the one(s) matching a queried object, averaged over 50\,000~searches
  \item \emph{search time}: the average time, in milliseconds, to find all the Bloom filters matching a queried object
  \item \emph{storage cost}: space required to store the Bloom filters and/or the associated index structure, if applicable. For Bloofi, we estimate this cost as the number of bytes for a Bloom filter, multiplied by the number of nodes in the Bloofi tree, including the leaves; for Flat-Bloofi, the storage cost is estimated as the number of bytes for a Bloom filters, multiplied by number of Bloom filters rounded up to a multiple of 64 (Flat-Bloofi uses longs for storage); for the ``naive'' case, the storage cost is estimated as the number of bytes for a Bloom filters, multiplied by number of Bloom filters
  \item \emph{maintenance bf-cost}: the average number of Bloofi nodes accessed during an insert, delete, or update operation
  \item \emph{maintenance time}: the average time, in milliseconds, for an insert, delete, or update operation
 \end{itemize}

\subsubsection{Parameters varied}
As described in Section~\ref{sec:bloom}, a Bloom filter is a bit array of length $m$ constructed using a set of $k$ hash functions. In practice, when constructing a Bloom filter, one does not specify the length and the number of hash functions. It is unlikely that the average engineer would know which values to pick. Rather,
one specifies the expected maximal number of elements to be stored, $\nbElExpected$, (a possibly large number) as well as the desired probability of having a false positive, $\setprobf$, (e.g., 1\%). One can reasonably expect an engineer to be able to set these values from domain knowledge. Then the number of hash functions and the size of the bitmaps can be computed using the following formulas $k = \left \lceil{- \ln \setprobf / \ln 2} \right \rceil$ and $m=\left \lceil{k / \ln 2 * \nbElExpected} \right \rceil$.

In the experiments we vary the following parameters:
\begin{itemize}
\item $N$: the number of Bloom filters indexed;
\item $d$: Bloofi order;
\item $m$: the size, in bits, of the Bloom filters indexed: $m$ is varied indirectly, by specifying $\nbElExpected$;
\item $n$: the number of elements in each Bloom filter indexed;
\item $\setprobf{}$: the desired probability of false positives in the Bloom filters indexed;
\item \emph{index construction method}: $iterative$, where we insert Bloom filters one by one using the algorithm in Section~\ref{sec:insert}, or $bulk$. For the bulk construction,
we first sort all Bloom filters such that the first Bloom filter is the one closest to the empty Bloom filter, the second is the filter closest to the first Bloom filter, etc., and then construct the Bloofi tree by always inserting next to the right-most leaf;
\item \emph{similarity measure}: the measure used to define ``closeness'' during insert. We consider Hamming, Cosine and Jaccard distances;
\item \emph{data distribution}: $nonrandom$, with non-overlapping ranges for each Bloom filter: each Bloom filter $i$ contains $\nbEl{}$~integers in the range $[i \times \nbEl{}, (i+1) \times \nbEl{})$ and $random$, with overlapping ranges for data in the Bloom filters: each Bloom filter $i$ contains $\nbEl{}$ random integers in a randomly assigned range.
\end{itemize}

For each experiment we vary one parameter and use the default values shown in Table~\ref{table:defaultValues} for the rest. We run each experiment 10~times, and we report the averages over the last 5~runs. Since our values $x$ are integers, we picked  hash functions at random of the form $h(x) = a x  \bmod{m}$ where $a$ is an odd integer defining the hash function.  We found that this  choice gave good performance in our case.

\begin{table}
 \centering\small
   \begin{tabular}{ll} \toprule
   parameter & value\\
\midrule        $N$ --- Number of Bloom filters indexed & 1000 \\ 
        $d$ --- Bloofi order & 2 \\ 
        $m$ --- Bloom filter size (bits) & \num{100992} \\ 
        $n$ --- Nb of elements in each Bloom filter & 100 \\
        $\setprobf{}$ --- Desired probability of false positives & 0.01 \\
        Construction method & iterative \\ 
        Similarity measure & Hamming \\ 
        Data distribution & nonrandom \\ 
\bottomrule
    \end{tabular}
\caption{Default Values for Parameters}\label{table:defaultValues}
\end{table}

\subsection{Performance results}\label{sec:performance}

\threefiguresnew
{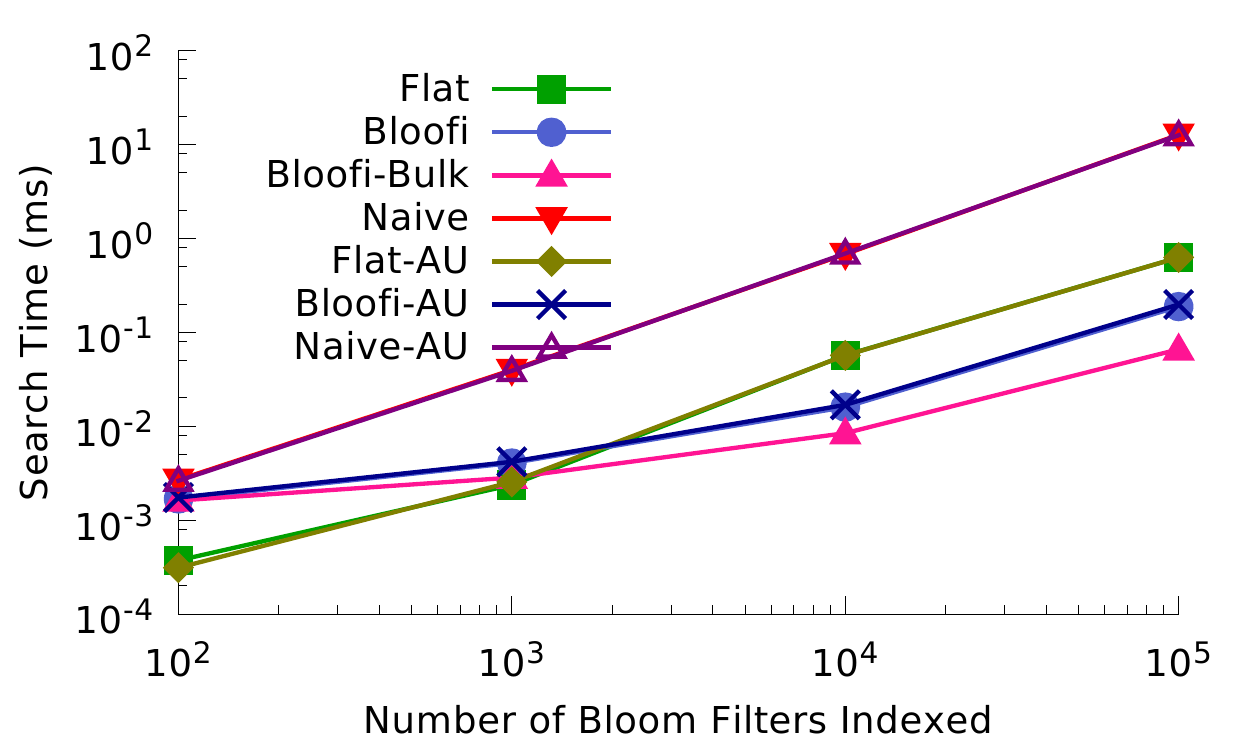}{Search Time  vs.~$N$}{fig:yesSearchTimeVsNbBFs}
{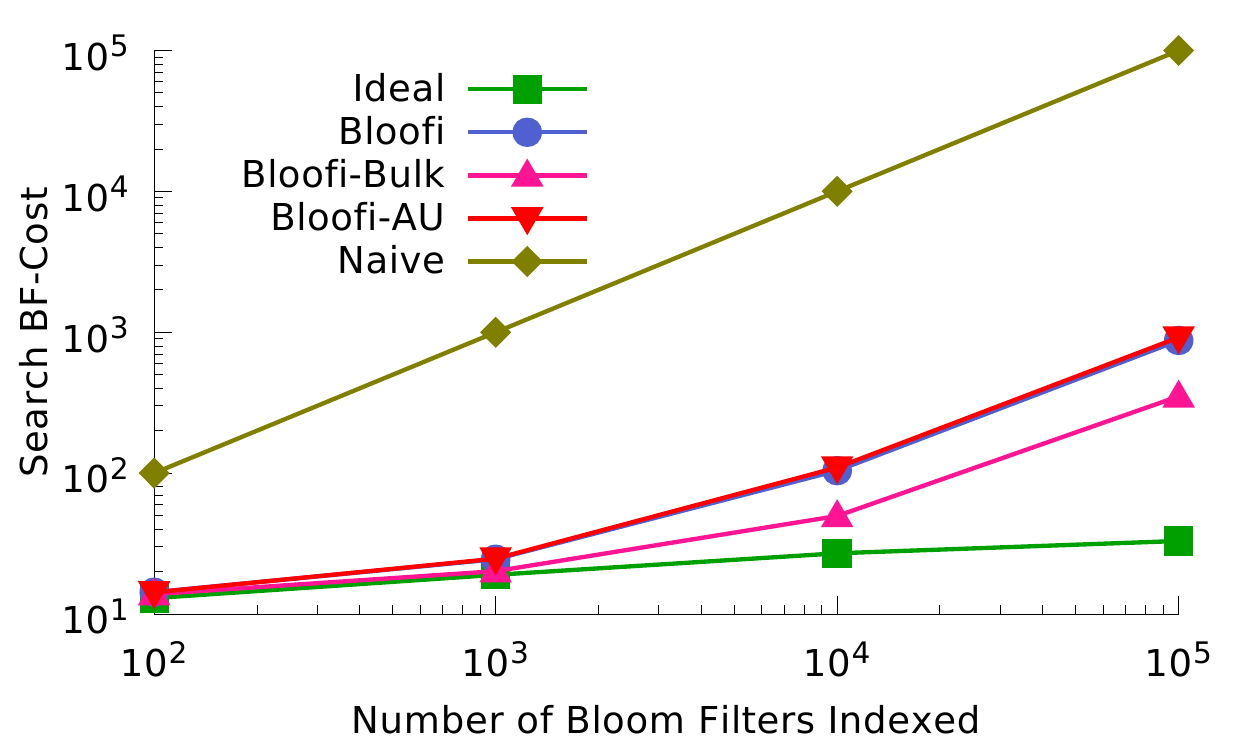}{Search BF-Cost vs.~$N$}{fig:yesSearchesVsNbBFs}
{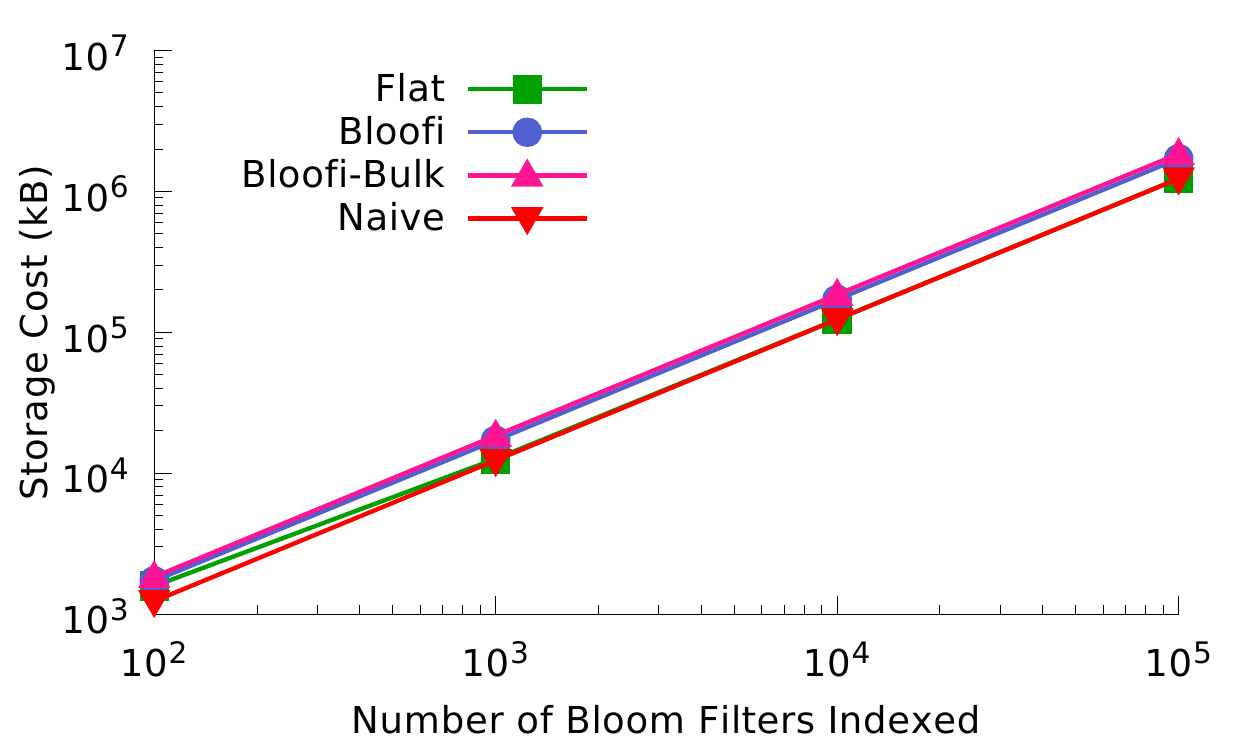}{Storage Cost vs.~$N$}{fig:storageCostVsNbBFs}
{Varying Number of Bloom Filters $N$}

\subsubsection{Varying number of Bloom filters indexed $N$}\label{sec:varyN}

Fig.~\ref{fig:yesSearchTimeVsNbBFs} shows the increase in search time as $N$ increases from 100 to \num{100000} for Bloofi, Flat-Bloofi, and Naive case. Note the logarithmic scale on both $X$ and $Y$ axes. For Bloofi, the increase in search time is logarithmic as long as the false positive probability ($\probf{}$) at the root is less than one ($N\le1000$), but the increase is higher than logarithmic after that, due to high value for $\probf{}$ at the high levels of the tree. However, even for \num{100000}~filters, Bloofi still performs orders of magnitude better than the ``naive'' case. The difference between ``naive'' case and Bloofi is not so big when only a few Bloom filters are checked, likely because the search time depends not only on the number of Bloom filters checked, but also on the locality of nodes in memory. We see that the curves for Bloofi and Flat-Bloofi intersect each other. For small numbers of Bloom filters, Flat-Bloofi performs better, due to its superior memory locality and exploitation of bit-level parallelism. However as the number of Bloom filters increases, Bloofi performs better, due to its superior pruning abilities.

For Bloofi, using bulk construction leads to improved search performance, since a global sort of all the Bloom filters is performed before the bulk insert, while the incremental construction is greedy and might not lead to optimal placement. However, the cost of sorting is $O(N^2)$ in our implementation, which leads to high index construction time for the bulk construction.

To evaluate the effect of the in-place update for Bloofi introduced in Section~\ref{sec:update}, we performed experiments where the Bloofi tree is built incrementally using only half of the elements in each Bloom filter. The rest of the elements are inserted then in each Bloom filter and Bloofi is updated in-place. We perform a similar experiment for Flat-Bloofi and the Naive case. The AU (After Updates) curves in Fig.~\ref{fig:yesSearchTimeVsNbBFs} show the search time in the final data structure. As we expected, for Flat-Bloofi and Naive, the After Updates and normal curves are almost identical, since the properties of these data structures are not affected by updates. However, the Bloofi-AU and Bloofi curves are also almost identical, which shows that the in-place update maintains the search performance of the Bloofi tree.

Fig.~\ref{fig:yesSearchesVsNbBFs} shows similar trends for the search bf-cost (average number of Bloom Filter nodes accessed for a search). The search bf-cost performance metric is not used for Flat-Bloofi and Naive, as for these data strictures, all the Bloom filters are checked during a search. In the ``ideal'' case for Bloofi, when exactly one path from root to the leaf is followed during search, the search bf-cost is approximately $l \log_l N + 1$ if each non-leaf node has $l$ children, and the search bf-cost increases logarithmically with $N$. In our experiments, the increase in search bf-cost is logarithmic as long as the false positive probability ($\probf{}$) at the root is less than one ($N\le1000$), but the increase is higher than logarithmic after that, due to high value for $\probf{}$ at the high levels of the tree. However, even for \num{100000}~filters, Bloofi still performs two orders of magnitude better than the ``naive'' case. If the size of the Bloom filters increases, the search cost decreases to the ``ideal'' cost, as shown in Fig.~\ref{fig:yesSearchesVsFilterSize}.

To evaluate the effects of the heuristic introduces in Section~\ref{sec:noSplitFull}, we run the same experiment without using the heuristic, so always splitting the root even if all the bits were set to 1. When there are Bloofi nodes with all bits 1, both  the search bf-cost and the search time when the heuristic is used are lower than when the heuristic is not used (search bf-cost is 104.29 vs.~110.17 for $N=\num{10000}$ and 876.33 vs.~974.92 for $N=\num{100000}$). This shows that using the heuristic increases the search performance of Bloofi index when the false positive probability at the high levels in the tree is high.

Fig.~\ref{fig:storageCostVsNbBFs} shows the increase in the storage cost for Bloofi, Flat-Bloofi, and Naive, as the number of Bloom filters indexed increases. In all cases, the storage cost increases linearly with $N$. The storage cost is lowest for Naive case, as no extra information besides the actual Bloom filters is maintained. Flat-Bloofi has a small overhead, only because sometimes rounds up the space to multiples of 64, so it can take advantage of the bit-level parallelism. The storage cost for Bloofi is also quite low, as the number of non-leaf nodes in the Bloofi tree is less than $N$. The storage cost for bulk construction (Bloofi-bulk) is slightly higher than for incremental construction (Bloofi), because constructing a tree by always inserting in the right-most leaf leads in general to skinnier trees, with more levels and more nodes.

\threefiguresnew
{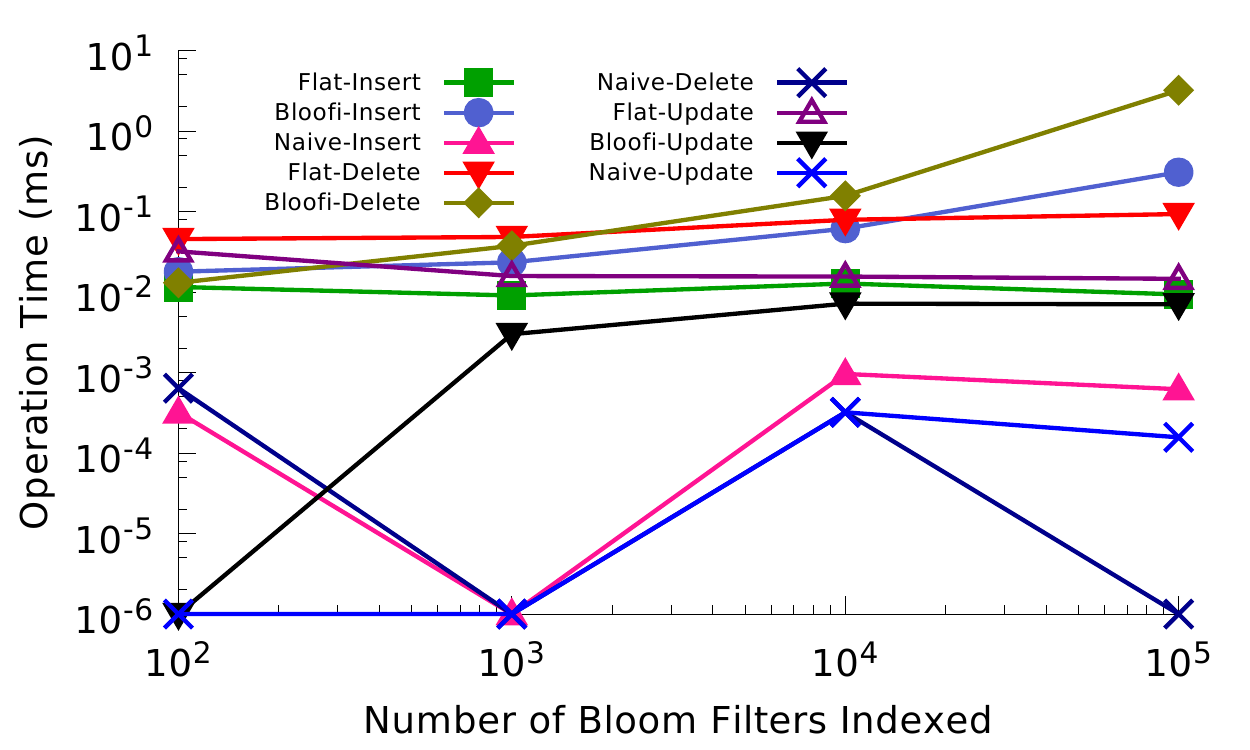}{Maintenance Time vs.~$N$}{fig:insDelTimeVsNbBFs}
{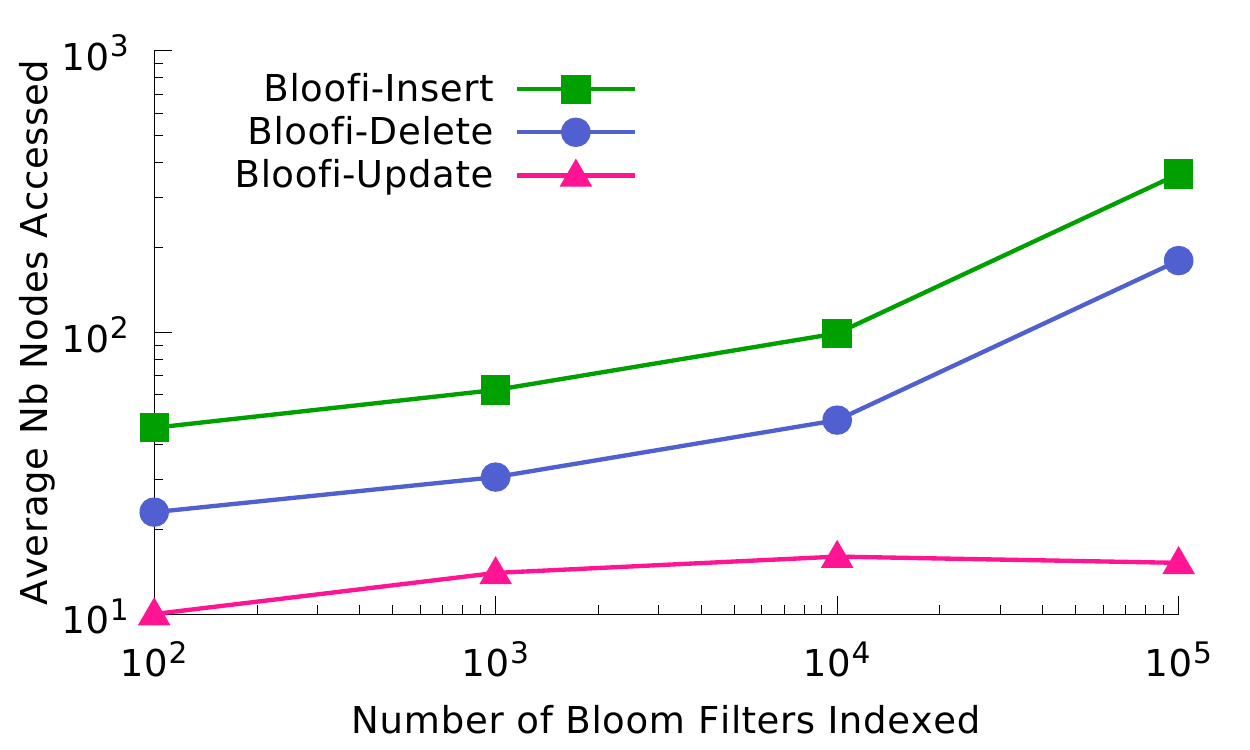}{Maintenance BF-Cost vs.~$N$}{fig:insDelCostVsNbBFs}
{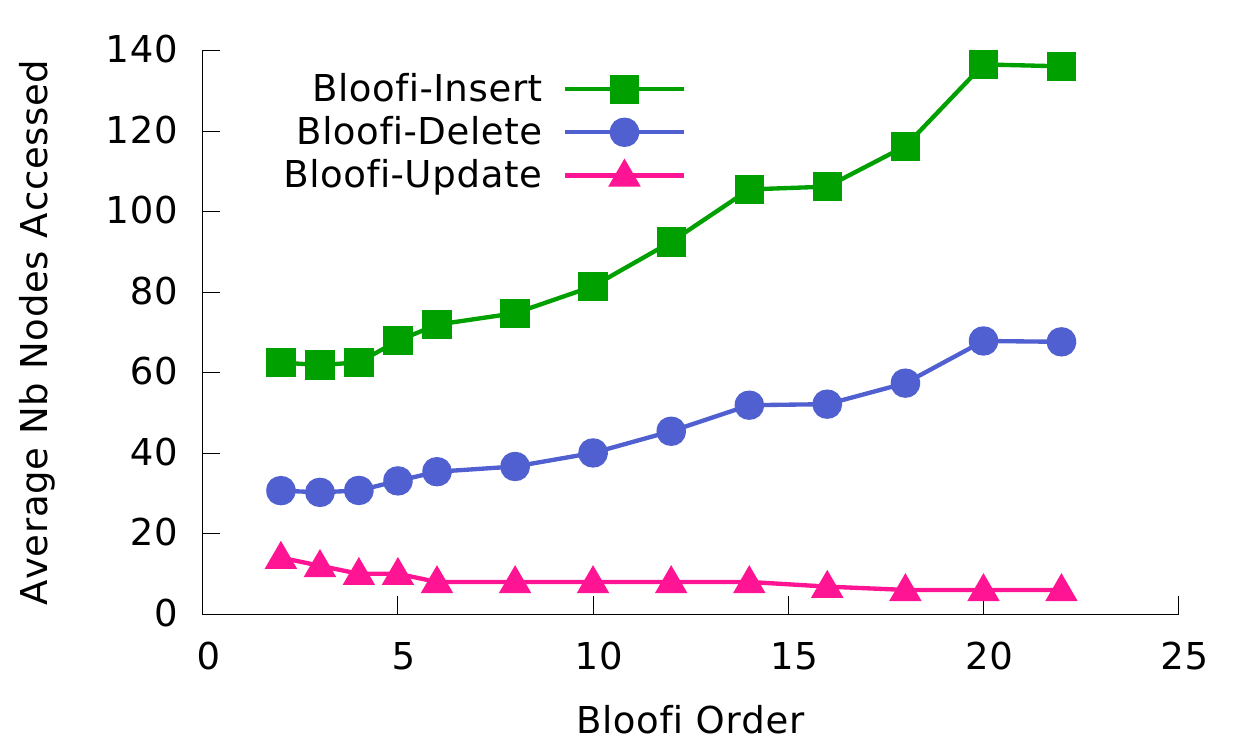}{Maintenance BF-Cost vs.~$d$}{fig:insDelCostVsOrder}
{Maintenance Cost}

Fig.~\ref{fig:insDelTimeVsNbBFs} shows the cost, in terms of time, of maintaining Bloofi, Flat-Bloofi, and the Naive data structure. The maintenance cost for Naive is negligible, as the Bloom filters are just maintained in a list. For Flat-Bloofi, the maintenance cost increases only slightly with the number of Bloom filters, as the cost of the different operations in Flat-Bloofi depends mainly on the size of the Bloom filters and the number of bits turned on, and does not depend much on the number of Bloom filters. For Bloofi, the trend is increasing for the insert and delete, as there are more nodes in the Bloofi tree that get impacted by insert or delete. The cost of updates does not increase after the root becomes all one and does not split, because we use in-place updates and the height of the tree does not increase after $N > 10000$.

We find the relative cost of insert, delete, and update for Bloofi and Flat-Bloofi  quite interesting. For Bloofi, update is the cheapest operation, both as time and number of Bloom filters accessed (see Fig.~\ref{fig:insDelCostVsNbBFs}), since only the values of the nodes from the leaf to the root need to be updated (OR-ed with the new value). However, for Bloofi, the insert operation is the most expensive as bf-cost, but delete becomes the most expensive as time, with the increased number of indexed Bloom filters. The reason for this difference is that while during a delete operation fewer nodes are accessed (no need to search for the place in the tree), we do more work at each node --- we need to re-compute the values of each node in the path from the leaf to root, by OR-ing all the children value. During insert, the values of the nodes get updated by OR-ing with the newly inserted Bloom filter. For Flat-Bloofi, when inserting a sparse Bloom filter, only a few words need to change, so it is  fast. When deleting, we do not know which bit changed, so we have to set them all to zero, which takes longer. If the Bloom filters would be more dense, it is possible that deletions could be faster than insertions. During updates, we do not know how the new Bloom filter differs from the old one, and we cannot compute the difference (with XOR) since we no longer have the old one, so we have to go through the set bits in the Bloom filter, and set them all to 1. The running time is close to an insert, which is what we see in Fig.~\ref{fig:insDelTimeVsNbBFs}. Comparing Bloofi and Flat-Bloofi, Bloofi's updates are cheaper, as Bloofi only needs to OR the new value with a few Bloom filters. For inserts, Flat-Bloofi is faster, as it does not need to search (and compute distances) the tree for the best place to insert the new node. For deletes, Bloofi is faster when the number of Bloom filters is below \num{10000} (and the $\probf{}$ at the root is not 1), but Flat-Bloofi's cost does not depend much on the number of Bloom filters, so it becomes faster for large number of Bloom filters indexed.

The maintenance bf-cost for Bloofi, as the average number of Bloom filters accessed during a maintenance operation, is shown in Fig.~\ref{fig:insDelCostVsNbBFs}. The maintenance bf-cost increases logarithmically with $N$ (Fig.~\ref{fig:insDelCostVsNbBFs}), as expected from Theorems~\ref{theorem:insertCost}--\ref{theorem:updateCost}. The bf-cost of insert is higher than the delete, as the place for the new node needs to be found during insert. The update cost is the lowest, as we use in-place updates, and no splits, merges, or redistributions are needed. The update bf-cost does not increase after the root of the Bloofi tree becomes one, as the root does not split in that case, so the height of the Bloofi tree does not increase with the increased number of indexed Bloom filters.

\threefiguresnew
{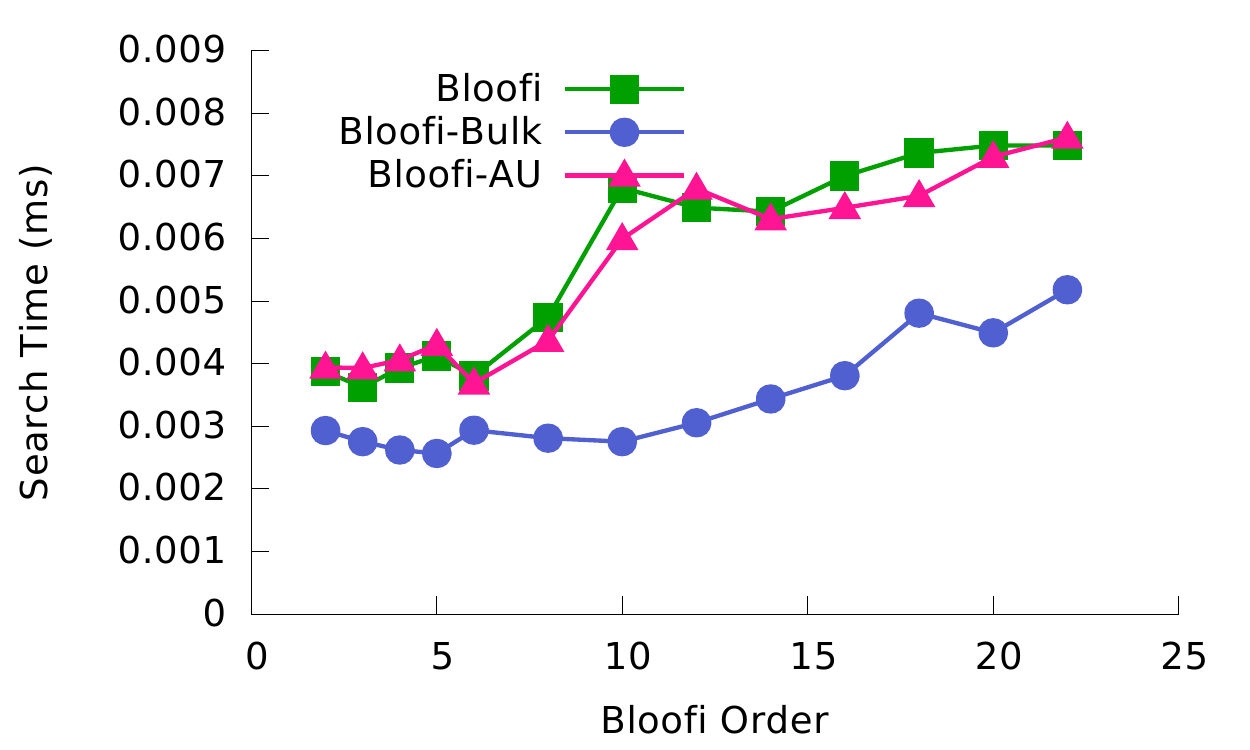}{Search Time vs.~$d$}{fig:yesSearchTimeVsOrder}
{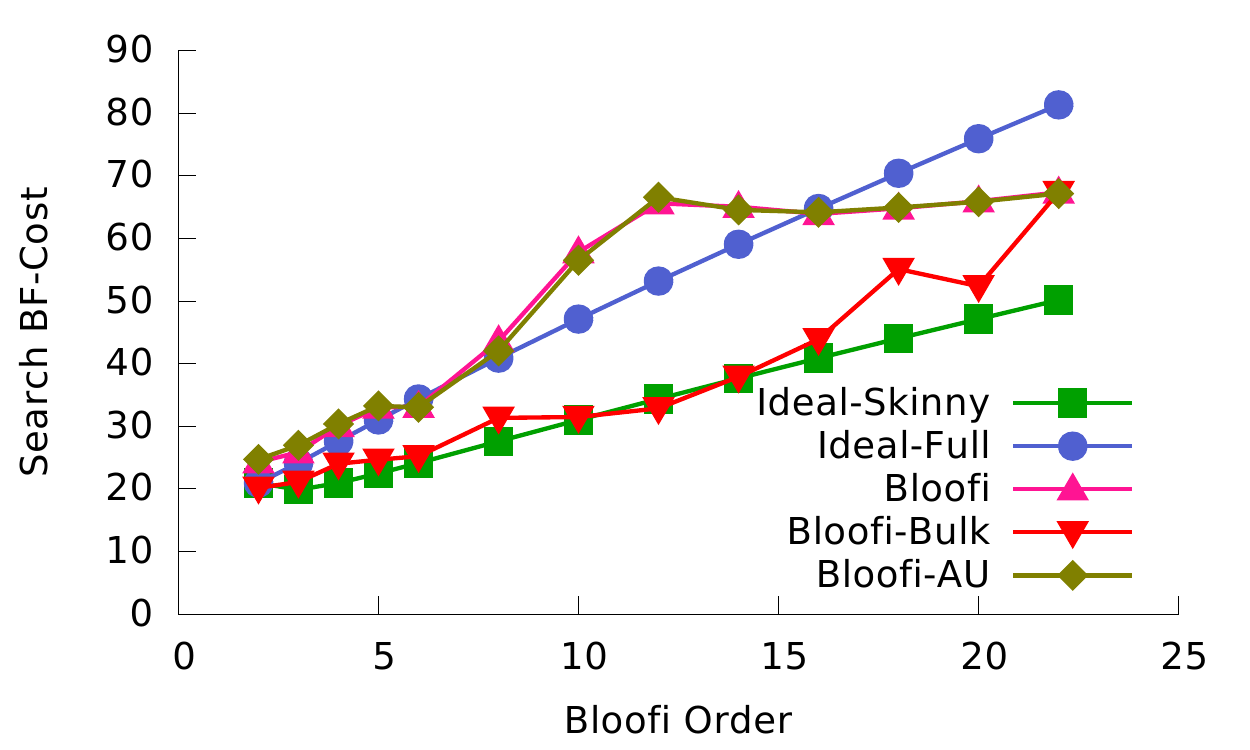}{Search BF-Cost vs.~$d$}{fig:yesSearchesVsOrder}
{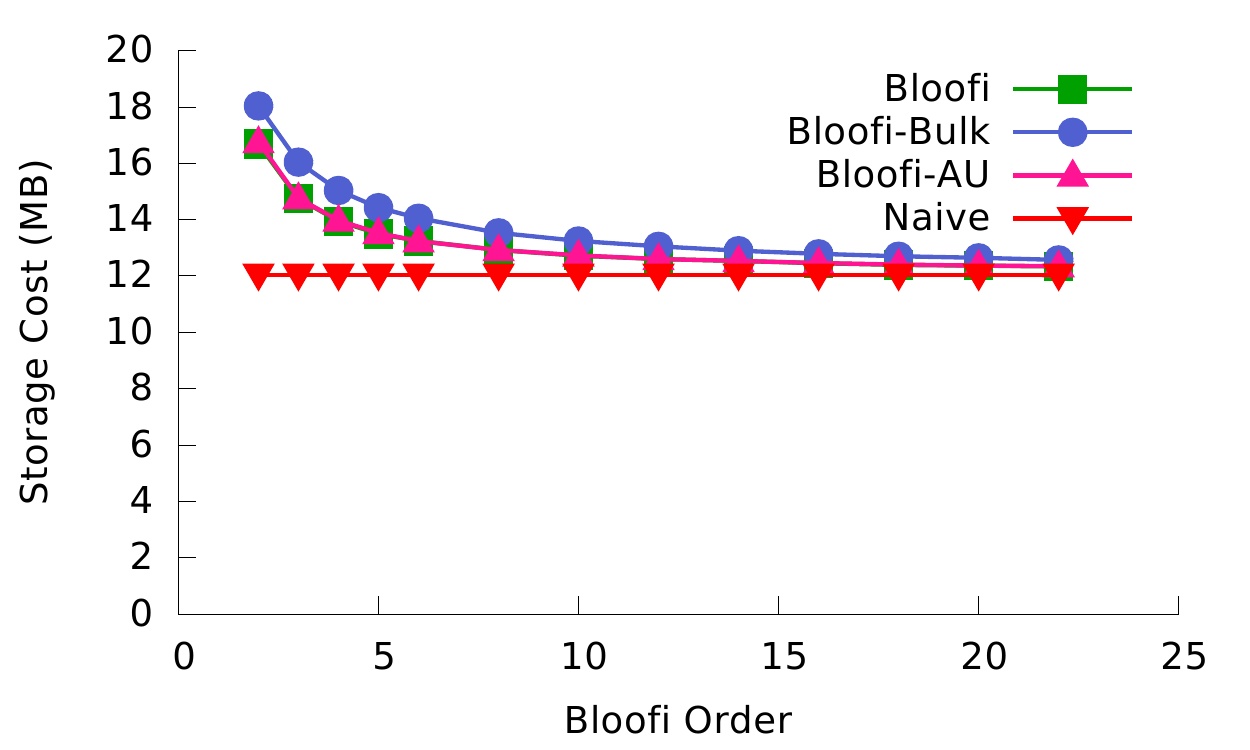}{Storage Cost vs.~$d$ }{fig:spaceCostVsOrder}
{Varying Bloofi Order $d$}

\subsubsection{Varying Bloofi order $d$}

Fig.~\ref{fig:yesSearchTimeVsOrder} and Fig.~\ref{fig:yesSearchesVsOrder} show the search time and search bf-cost as the order of Bloofi increases. The search bf-cost and search time increase as $d$ increases, since the search cost is proportional to $d \log_d N$. For a constant $N$, the function $f(x) = x \log_x N$ is convex, with minimum value when $x = 2.718 (e)$. The best search performance is achieved for Bloofi tree of low order. The search bf-cost obtained from experiments is  close to the ideal search cost for full trees ($2d$ children per non-leaf node), which shows that our algorithms for tree construction perform  well, and not many nodes outside of the path from the root to the answer leaf are checked during search. Bulk construction performs better than incremental construction, due to the global sort. The Bloofi and Bloofi-AU lines are almost identical, which shows that our in-place update maintains the performance of the Bloofi tree.

The storage cost decreases with order (Fig.~\ref{fig:spaceCostVsOrder}), as the number of non-leaf nodes in a Bloofi tree of order $d$ is between $\frac{N-1}{2 d-1}$ and $\frac{N-1}{d-1}$, so higher the order, the lower the overhead of storing the constructed Bloofi tree. While the storage cost is lower for higher orders, the search cost is higher (Fig.~\ref{fig:yesSearchesVsOrder}), so there is a trade-off between search and storage cost. We expect search performance to be more important than storage in practice, and the storage cost of Bloofi is similar with the Naive case even in the worst case (at most twice as much space is needed for Bloofi as it is needed for the Naive case), so we believe that Bloofi trees of low order will be used more in practice. The storage cost for bulk construction is slightly higher than for incremental construction, because always inserting into the right-most leaf leads to skinnier, taller trees, with more nodes.

Fig.~\ref{fig:insDelCostVsOrder} shows the variation of maintenance bf-cost with the order of Bloofi tree. The insert and delete costs ($O(d \log_d N)$) increase with $d$, while the update cost ($O(\log_d N)$) decreases with $d$.

\subsubsection{Varying Bloom filter size}

We vary the Bloom filter size $m$ by varying the number of expected elements in the Bloom filter $\nbElExpected$ from \num{100} to \num{100000}. $m=\left \lceil{k / \ln 2 * \nbElExpected} \right \rceil$.

Fig.~\ref{fig:yesSearchesVsFilterSize} shows the search bf-cost variation with the Bloom filter size in a system with \num{100000}~total elements (1000 filters with 100 elements each). The search bf-cost for Bloofi is always below Naive cost, and it decreases to the ideal cost as the size of the Bloom filters increases and the false positive probability at the high level nodes in Bloofi decreases. In fact, $O(d \log_d N)$ search cost is achieved as soon as $\probf{}<1$ for the root, even if $\probf{}$ is  close to 1 (\num{0.993} in our experiments for size \num{100992}).

The search time (Fig.~\ref{fig:yesSearchTimeVsFilterSize}) also decreases for Bloofi as $\probf{}$ at the higher levels is reduced, while the search time for Naive case is almost constant. For  small Bloom filter sizes, the search time for the Naive case is slightly lower than using Bloofi, likely due to memory locality. The search time for Flat-Bloofi is lower than for Bloofi when the pruning capabilities of Bloofi are reduced, due to the efficient bit level parallelism exploited by Flat-Bloofi. As the size of the Bloom filters increases, the memory locality of Flat-Bloofi is reduced, and the trend for the search time is upward, while the trend is downward for Bloofi.

\threefiguresnew
{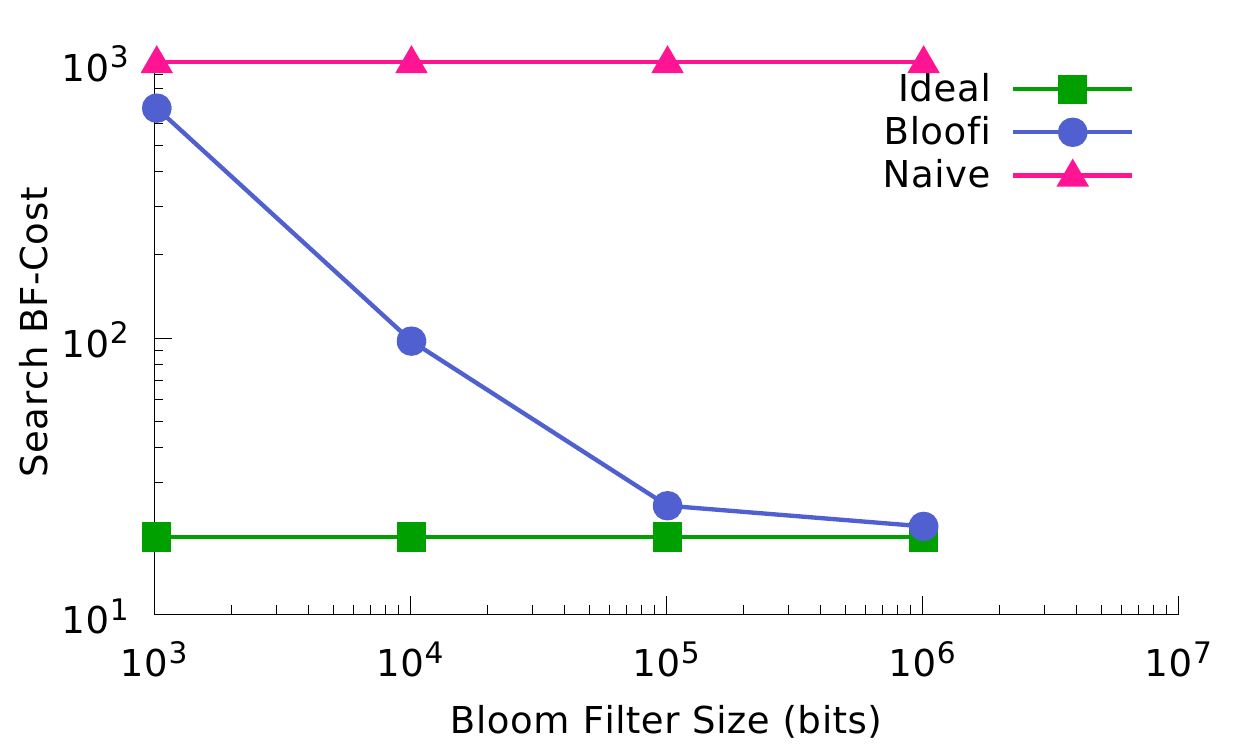}{Search BF-Cost vs.~Filter Size}{fig:yesSearchesVsFilterSize}
{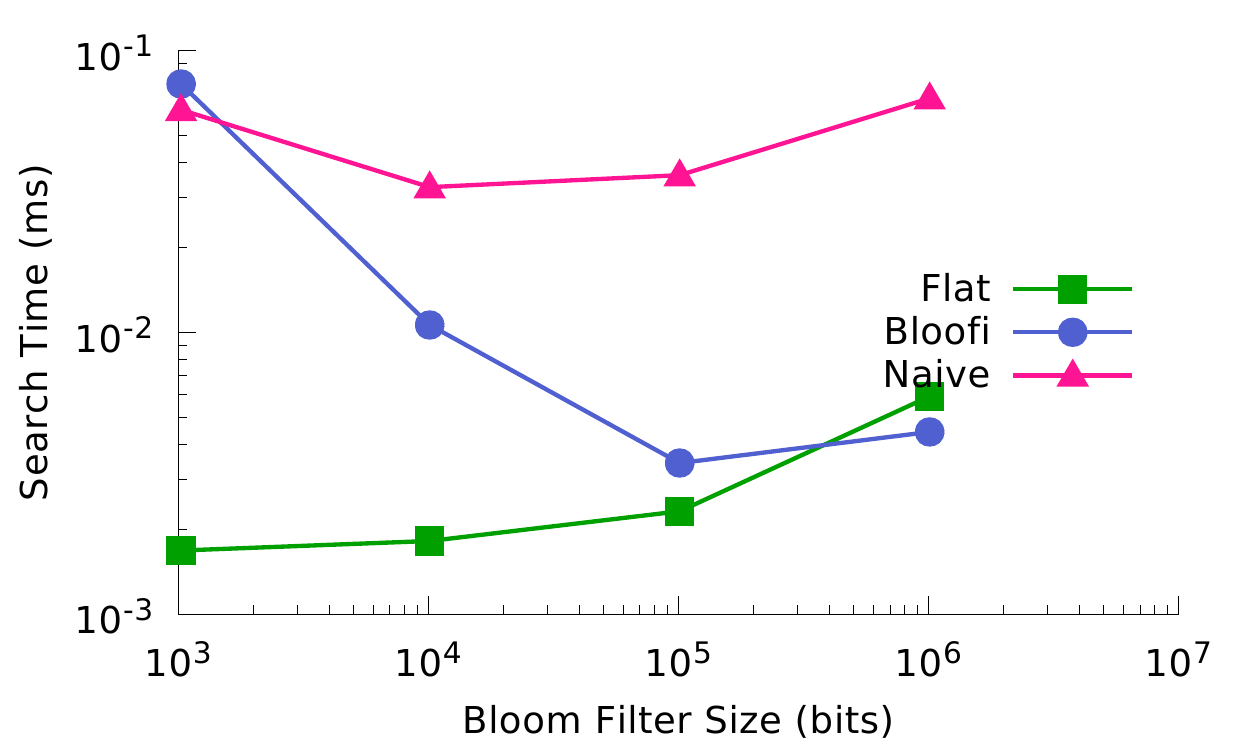}{Search Time vs.~Filter Size}{fig:yesSearchTimeVsFilterSize}
{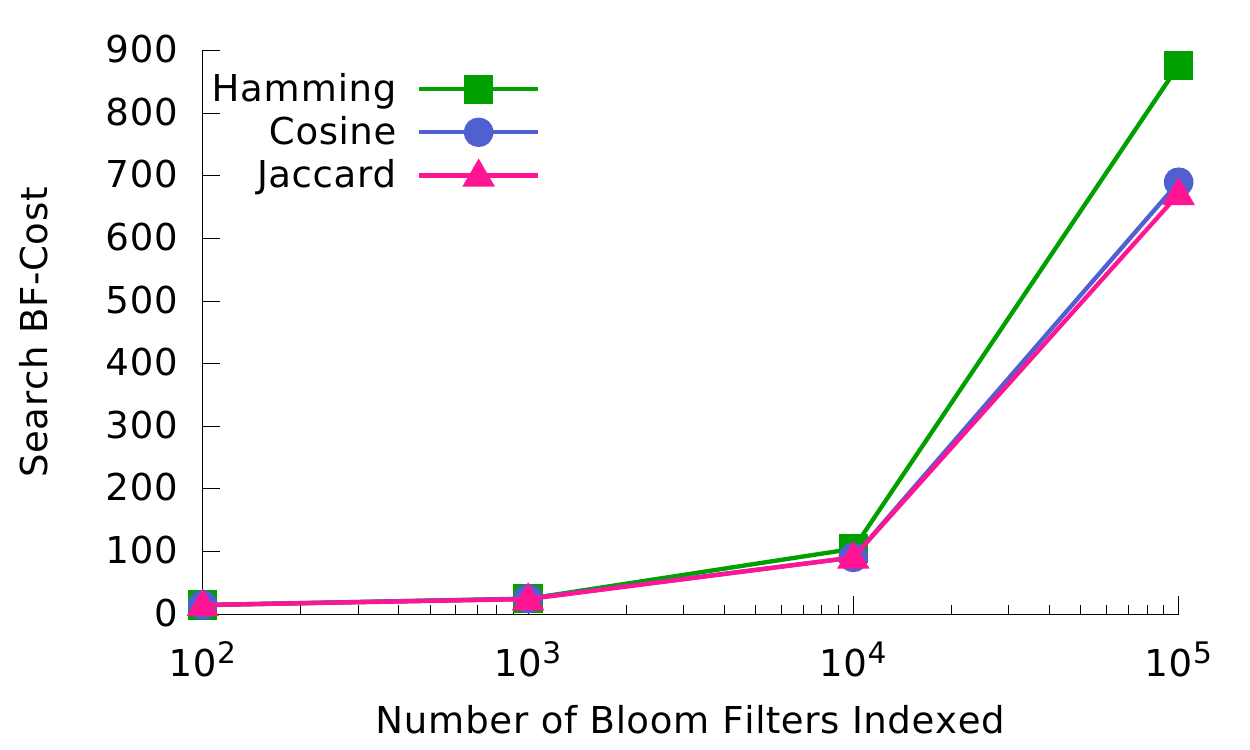}{Search BF-Cost vs.~Metric vs.~$N$}{fig:yesSearchesVsMetricNbBFs}
{}

\subsubsection{Varying the probability of false positives}\label{sec:varyPfalse}

Fig.~\ref{fig:yesSearchesVsPfalse} shows how the search bf-cost for Bloofi increases as we increase the desired probability of false positives in the Bloom filters indexed.  As expected, the bf-cost slowly increases as $\setprobf{}$ and implicitly $\probf{}$ increases, as the pruning capabilities of Bloofi are reduced when $\probf{}$ increases. $\probf{}<1$ at the root for all the $\setprobf{}$ values tested in the experiment, so Bloofi's performance is close to the ideal case.

A more surprising result is shown in Fig.~\ref {fig:yesSearchTimeVsPfalse}. The search time for the Naive case, Bloofi, and Flat-Bloofi is decreasing in the experiment, even if the search bf-cost is increasing. The main reason is the memory locality. As the $\setprobf$ increases, the size of the Bloom filters decreases, leading to better memory locality properties. Flat-Bloofi, with its efficient exploitation of the bit-level parallelism benefits the most from the low memory footprint. This is consistent with the results shown in Fig.~\ref{fig:yesSearchTimeVsFilterSize}. A second reason for the decrease in the search time for all cases is that the number $k$ of hash functions used for the Bloom filters decreases as $\setprobf$ increases. $k = \left \lceil{- \ln \setprobf / \ln 2} \right \rceil$ and decreases from 24 to 4 in the experiment. As fewer hash functions are used, fewer bits in the Bloom filters need to be checked for matches. This contributes to the faster search time.

\subsubsection{Varying the number of elements $n$}\label{sec:varyn}

Fig.~\ref{fig:yesSearchTimeVsNbEl} shows how the search time varies with the increase in $n$, the number of elements in the indexed Bloom filters, as the size of the Bloom filters remains constant. Flat-Bloofi provides the best performance in this case, as it benefits from the memory locality and bit-level parallelism, and the performance is almost constant regardless of the actual number of elements in the Bloom filter. Search time for Bloofi increases only slightly as long as $\probf < 1$ at the root, even when the number of elements in the system is ten times larger than the expected number of elements in a Bloom filter (for $n$ = 100, there are \num{100000} elements in the system, and $nbElExp$ = \num{10000}). Bloofi's performance degrades as its pruning capabilities are reduced when $\probf=1$ at the root. The search bf-cost for Bloofi varies in a similar way (not shown).

\threefiguresnew
{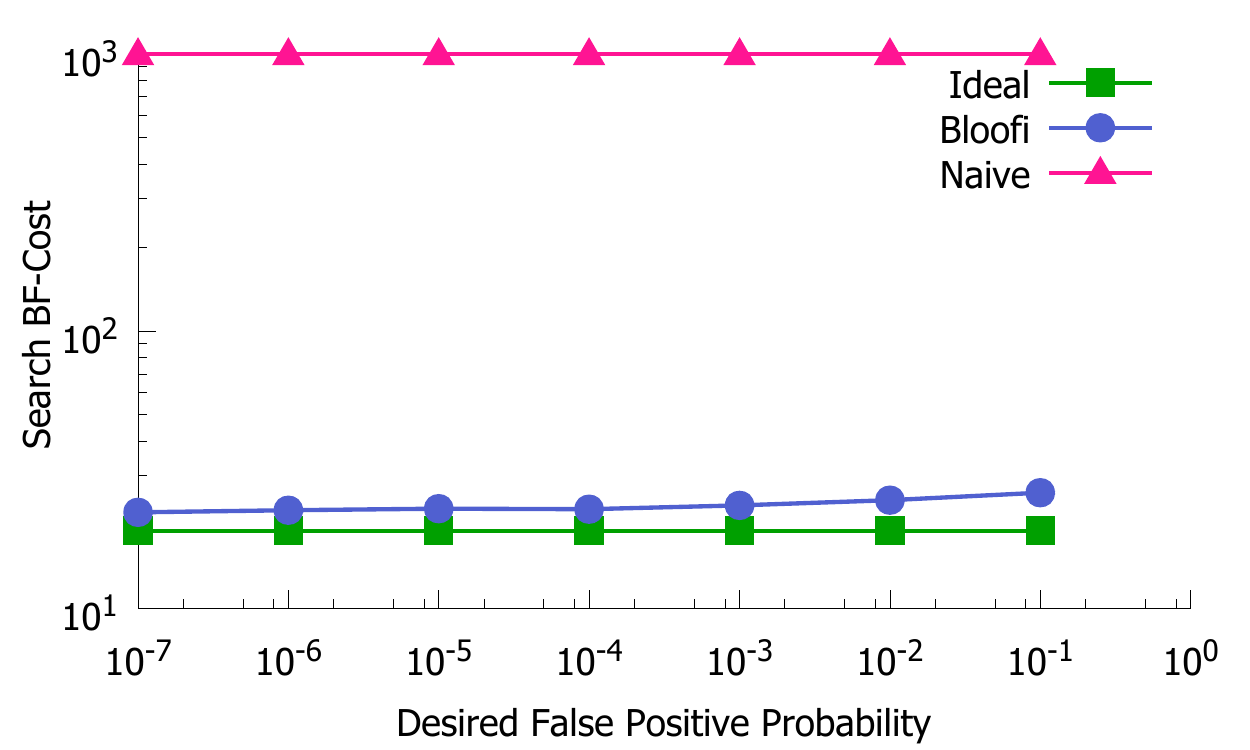}{Search BF-Cost vs.~$\setprobf$}{fig:yesSearchesVsPfalse}
{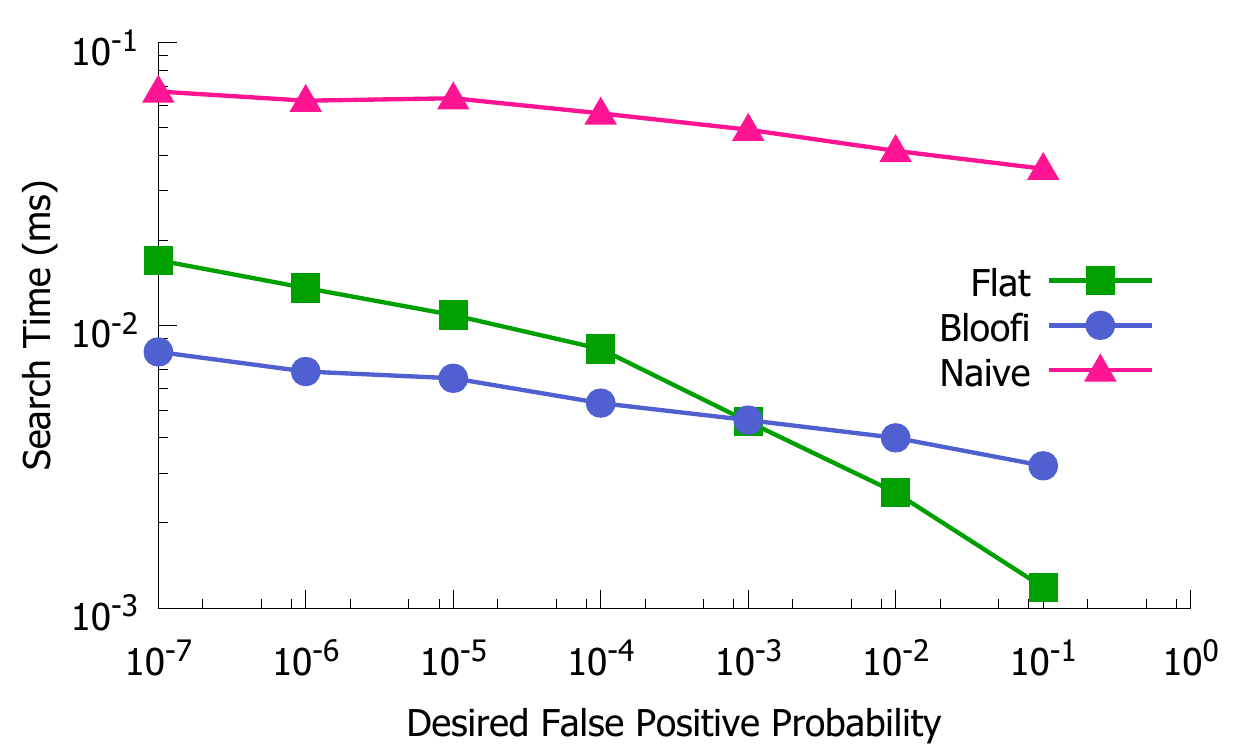}{Search Time vs.~$\setprobf$}{fig:yesSearchTimeVsPfalse}
{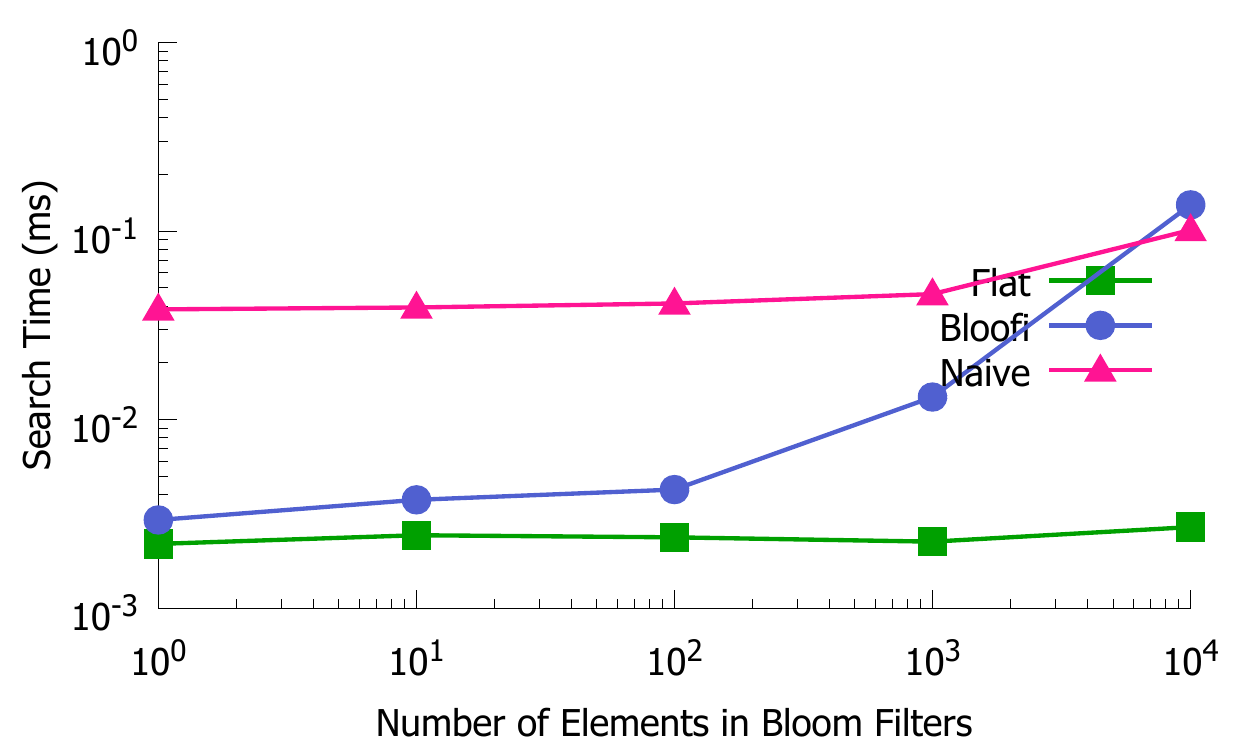}{Search Time vs.~$n$}{fig:yesSearchTimeVsNbEl}
{}

\subsubsection{Varying similarity metric}\label{sec:varyMetric}
During Bloofi construction, we use a similarity metric to measure the similarity between two Bloom filters, and this metric determines the location of the Bloom filters in Bloofi. In this section we experiment with several metrics:
\begin{itemize}
\item $\text{Ham\-ming}(A,B) = |A\ \mathrm{xor}\ B|$,
\item $\text{Jac\-card}(A,B) = 1-|A\ \mathrm{and}\ B|/|A\ \mathrm{or}$ $\ B|$,
\item and $\text{Cosine}(A,B) = 1 - |A\ \mathrm{and}\ B|/\|A\|_2 \times \|B\|_2$,
\end{itemize}
where $|A|$ is the number of 1s in the bit array $A$. The search bf-cost is similar for all the metrics (Fig.~\ref{fig:yesSearchesVsMetricNbBFs}). When Jaccard is used, the search cost is a little lower, but the differences are small, and they might be just due to chance. A similar trend was obtained for search time, as shown in Fig.~\ref{fig:yesSearchTimeVsMetricNbBFs}. The search time is the lowest for Jaccard, and highest for Hamming.

\threefiguresnew
 {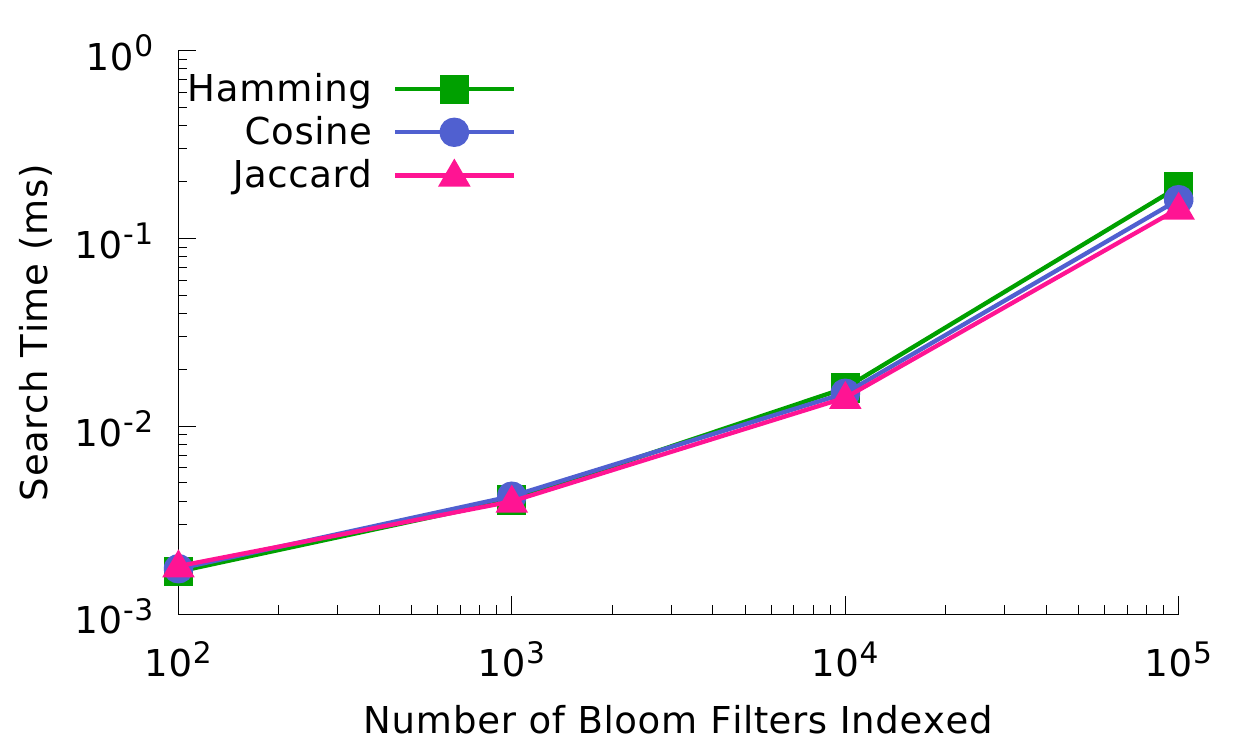}{Search Time vs.~Metric vs.~$N$}{fig:yesSearchTimeVsMetricNbBFs}
 {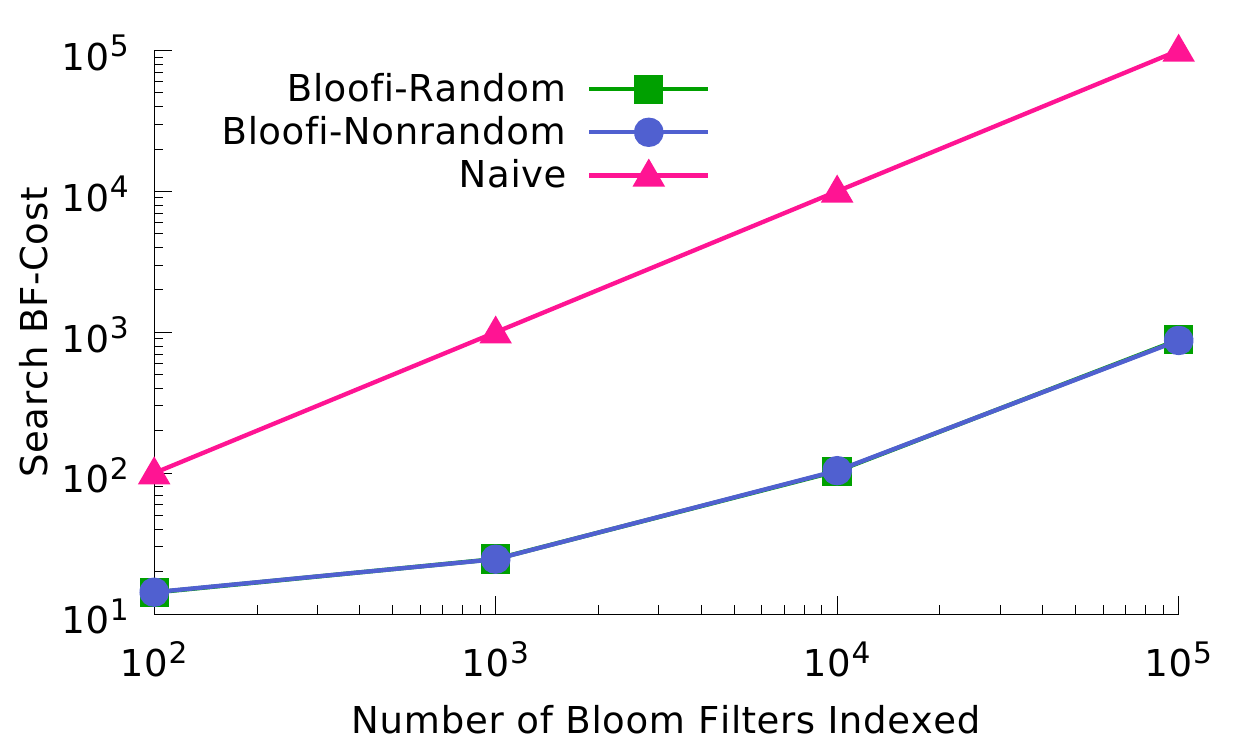}{Search BF-Cost vs.~Data Distribution vs.~$N$}{fig:yesSearchesVsDistrNbBFs}
{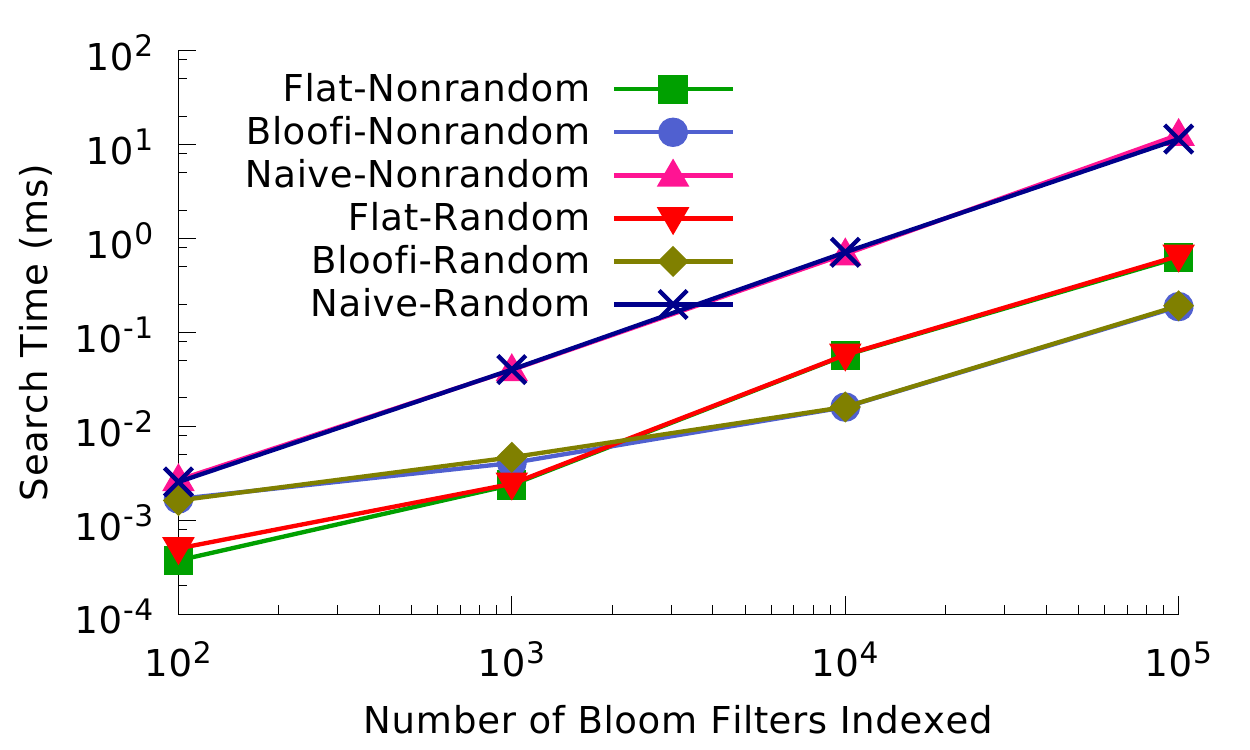}{Search Time vs.~Data Distribution vs.~$N$}{fig:yesSearchTimeVsDistrNbBFs}
{}

\subsubsection{Varying underlying data distribution}
 We use the following distributions for the underlying data in the Bloom filters: $nonran\-dom$, in which we insert in each Bloom filter $0 \le i \le N$ the $\nbEl{}$ integers in the range $[i\times \nbEl{}, i\times \nbEl{}+\nbEl{})$, and $random$, in which we insert in each Bloom filter $\nbEl{}$ integers randomly chosen from a random range assigned to that Bloom filter. $\nbEl{}$ is the actual number of elements in the Bloom filters, which is set to 100 for these experiments. In the $nonrandom$ case, there is no overlap between the sets represented by different Bloom filters, while in the $random$ case, we could have overlap between the sets. We expect that in a real scenario, the $random$ case is more common. The search cost is shown in Fig.~\ref{fig:yesSearchesVsDistrNbBFs} and the search time is shown in Fig.~\ref{fig:yesSearchTimeVsDistrNbBFs}. We expected the performance in the $random$ case to be a little worse than in the $nonrandom$ case, since multiple Bloom filters might match a queried object. However, the overlap of ranges in the $nonrandom$ case was not enough to lead to a substantial increase in the search cost or search time for the $nonrandom$ case, even if the number of results was indeed larger for $nonrandom$ case. The search performance is  similar for both distributions used, which shows that Bloofi is robust to the underlying data distribution.

\subsection{Experimental results conclusion}
Our experimental evaluation shows that:
\begin{itemize}\itemsep1pt \parskip0pt \parsep0pt
\item For Bloofi, the search cost increases logarithmically with $N$, but gets worse if the false positive probability at the root is 1. For optimal performance, the size of the Bloom filters should be based on the estimate for the total number of elements in the entire system. If the false positive probability at the root is below 1, even if  close to 1, the search cost is $O(d \log_d N)$.
\item For Flat-Bloofi, the search time increases with $N$. For low number of Bloom filters, search time for Flat-Bloofi is lower than for Bloofi.
\item Search costs for Bloofi increase with order, since search cost is in $O(d \log_d N)$, so low order is preferred for Bloofi.
\item Storage cost for Bloofi is $O(N+N/d)$
\item For Bloofi, insert cost and delete cost are $O(d \log_d N)$, and update cost is $O(log_d N)$
\item Bulk construction gives slightly better search results, since the trees constructed with bulk construction are skinnier. However, differences in performance between bulk and incremental construction are small. The cost of sorting, which is the first step in bulk construction, is $O(N^2)$ in our implementation, so the operation is expensive. Our experimental results show that the Bloofi algorithm for insertion produces a tree  close to the tree obtained by using a global ordering of Bloom filters.
\item The distance metric used to compare the ``closeness'' between Bloom filters does not have a big effect on the search performance of the resulting Bloofi tree, but the Jaccard distance seems to lead to best search performance.
\end{itemize}

\section{Related work}\label{sec:relatedWork}

Part of the existing work related to Bloom filters~\cite{Cohen2003spectralBloomFilter, Deng2006stableBloomFilters, Dutta2012inteligentCompression, fan98webCaching, Mitzenmacher2001compressedBloomFIlter} is concerned with extending or improving the Bloom filters themselves and does not deal with the problem of searching through a large set of Bloom filters.

Applications of the Bloom filters to web caching~\cite{fan98webCaching} use multiple Bloom filters, but their number is in general small and a linear search through the Bloom filters is performed. \cite{Gong05bloomfilter-based} introduces an XML filtering system based on Bloom filters, and uses a data structure similar to Flat-Bloofi. Mullin~\cite{mullin90tose} uses Bloom filters to reduce the cost of semijoins in distributed databases. Most applications of Bloom filter~\cite{Broder02networkapplications,Chang:2008:BDS:1365815.1365816}, use the Bloom filters directly and do not search through a large number of Bloom filters.

B+~trees~\cite{Comer:1979:UBT} inspired the implementation of the Bloofi tree. However, each node in Bloofi has only one value, and the children of a node do not represent completely disjoined sets, so multiple paths might be followed during search.

A closely related idea to that of the multidimensional Bloom filter problem dates back to superimposed codes~\cite{1455812} and descriptor files~\cite{Pfaltz:1980:PRU:359007.359013}: descriptors (effectively bit arrays) are searched through large files
with a hierarchical data structure constructed by OR-ing the descriptions. When applied to set-valued attributes, these
descriptors are called
``signatures'': each possible value is mapped to a
fixed number of bit positions that are set to true
if the value is present, sets are constructed by
setting all the bit positions corresponding to all values.
The
S-tree~\cite{Deppisch:1986:SDB:253168.253189,390248}
is an implementation of this idea which resembles Bloofi: signatures
are organized in a similar tree structure using the Hamming distance
to aggregate similar signatures.
S-trees are primarily used to index set-valued attributes.
Our problem differs in that we are provided Bloom filters in a distributed setting, and must quickly locate which Bloom filter matches a given query.

Our Flat-Bloofi approach is similar to the
word-parallel, bit-serial (WPBS) approach used for scanning signatures~\cite{Ahuja:1980:APP:800053.801929,46285}, except that our data structure is in-memory, supports moderately fast deletion and does not require dedicated hardware. It is also similar to bitsliced signature files~\cite{Sacks-Davis:1987:MAM:32204.32222,Zobel:1998:IFV:296854.277632} as they apply to text indexing.

\section{Conclusions and future work}\label{sec:conclusions}
 We introduced Bloofi, a hierarchical index structure for Bloom filters. By taking advantage of intrinsic properties of Bloom filters, Bloofi reduces the search cost of membership queries over thousands of Bloom filters and efficiently supports updates of the existing Bloom filters as well as insertion and deletion of filters. Our experimental results show that Bloofi scales to tens of thousands of Bloom filters, with low storage and maintenance cost. In the extreme worst case, Bloofi's performance is similar with not using any index ($O(N)$ search cost), and in the vast majority of scenarios, Bloofi delivers close to logarithmic performance even if the false positive probability at the root is close to (but less than) one.
 When there are fewer Bloom filters, we found that an alternative
 designed to exploit bit-level parallelism (Flat-Bloofi) fared better.

We could pursue more advanced applications. For example, Bloofi and Flat-Bloofi could be applied when only current data from a moving window needs to be maintained. If the window of interest is multiple days, separate Bloom filters can be constructed for data collected each day. Old Bloom filters can be deleted from Bloofi, and new filters inserted, so only the objects in the current window are represented in the Bloofi index.

\section{Acknowledgements}
Special thanks to C.~Crainiceanu for insightful discussions while developing Bloofi, and D.~Rapp, A.~Skene, and B.~Cooper for providing motivation and applications for this work.

\section{References}

\balance
\bibliographystyle{abbrv}
\bibliography{Bloofi}
\end{document}